\newtheorem{theorem}{\textbf{Theorem}}
\newtheorem{lemma}[theorem]{\textbf{Lemma}}
\newtheorem{proposition}[theorem]{\textbf{Proposition}}
\newtheorem{remark}[theorem]{\textbf{Remark}}
\newcommand{\secref}[1]{Section~\ref{#1}}
\newcommand{\figref}[1]{Figure~\ref{#1}}
\newcommand{\theoref}[1]{Theorem~\ref{#1}}
\newcommand{\proref}[1]{Proposition~\ref{#1}}
\newcommand{\lemref}[1]{Lemma~\ref{#1}}
\def\Tr{\mathrm{Tr}}
\title{Optimized Compressed Sensing Matrix Design for Noisy Communication Channels}
\author{Amirpasha Shirazinia and  Subhrakanti Dey \\
Signals \& Systems Division, Department of Engineering Sciences, Uppsala University, Sweden \\
\texttt{Email: amirpasha.shirazinia@signal.uu.se, subhrakanti.dey@signal.uu.se}}
\begin{document}
\maketitle

\begin{abstract}
We investigate a power-constrained sensing matrix design problem for a compressed sensing framework. We adopt a mean square error (MSE) performance criterion for sparse source reconstruction in a system where the source-to-sensor channel and the sensor-to-decoder communication channel are noisy. Our proposed sensing matrix design procedure relies upon minimizing a lower-bound on the MSE.  Under certain conditions, we derive closed-form solutions to the optimization problem. Through numerical experiments, by applying practical sparse reconstruction algorithms, we show the strength of the proposed scheme by comparing it with other relevant methods. We discuss the  computational complexity of our design method, and develop an equivalent stochastic optimization method to the problem of interest that can be solved approximately with a significantly less computational burden. We illustrate that the  low-complexity method still outperforms the popular competing methods.

\end{abstract}

\vspace{-0.25cm}
\section{Introduction} \label{sec:intro}
Compressed sensing (CS) \cite{06:Donoho,06:Candes,08:Candes} is an emerging tool for simultaneous signal acquisition and compression that significantly reduces the cost due to sampling, leading to low-power consumption and low-bandwidth communication. CS is indeed a mathematical framework, based on linear dimensionality reduction, and builds upon the fact that the source signal can be represented in a sparse form, which is true for many physically observed signals.

In order to clarify the concept of CS in relation to the objectives of our work, let us consider the linear reduction model $\mathbf{y = A x + n}$, where $\mathbf{x} \in \mathbb{R}^N$ is a sparse vector (in a known basis) with a size higher than that of the measurement vector $\mathbf{y} \in \mathbb{R}^{M}$. Further, $\mathbf{A} \in \mathbb{R}^{M \times N}$ is a \textit{fat} sensing matrix (i.e., $M < N$), and $\mathbf{n} \in \mathbb{R}^{M}$ is the measurement noise vector. It should be mentioned that a careful design of the sensing matrix $\mathbf{A}$ is crucial in order to achieve promising performance of sparse reconstruction algorithms. Moreover, as shown analytically in \cite{14:Shirazi},  the sensing matrix has an important role in not only determining the amount of estimation error, but also in characterizing the amount of distortion due to quantization and transmission of CS measurements over noisy  communication channels. Therefore, in this paper, we are interested in the optimized design of the sensing matrix $\mathbf{A}$ with respect to an appropriate performance criterion reflecting the mean square estimation error due to transmission over a noisy communication channel.

In the literature, available approaches for designing sensing matrices for estimation purposes can be divided into  three main kinds:

	1) In the first category, the  design method is linked to a fundamental feature of the sensing matrix $\mathbf{A}$, called mutual coherence \cite{01:Donoho}, which is defined as follows
\begin{equation} \label{eq:mutual co}
    \mu \triangleq  \underset{i \neq j}{\max} \hspace{0.2cm} \frac{|\mathbf{A}_i^\top \mathbf{A}_j|}{\|\mathbf{A}_i\|_2 \|\mathbf{A}_j\|_2}, \hspace{0.2cm} 1 \leq i,j \leq N,
\end{equation}
where $\mathbf{A}_i$ denotes the $i^{th}$ column of $\mathbf{A}$. One of the early works within this category is \cite{07:Elad} which studied algorithmic methods in order to minimize the mutual coherence.
	
	2) In the second category, in order to analytically address the sensing matrix design problem in a more  tractable manner, and to reduce the amount of mutual coherence, the sensing matrix $\mathbf{A}$ is optimized with respect to satisfying
	\begin{equation} \label{eq:2nd cat}
	\begin{aligned}
		&\underset{\mathbf{A}}{{\text{minimize}}} \hspace{0.25cm} \|  \mathbf{A}^\top \mathbf{A}  - \mathbf{I}_N \|_F ,& \\
	\end{aligned}
	\end{equation}
 where $\| \cdot \|_F$ denotes the Frobenius norm, and $(\cdot)^\top$ denotes the matrix transpose. Some works in this category are \cite{11:Zelnik,10:vahid,10:vahid2,13:Gang,09:Duarte}.
	
	3) While in the first and second categories, the sensing matrix is mainly designed in a way to address the worst-case performance of sparse reconstruction, the actual performance, such as mean square error (MSE) of sparse reconstruction, is typically far less. Thus, one might consider minimization of
\begin{equation} \label{eq:MSE cat3}
	\mathrm{MSE} \triangleq \mathbb{E}[\|\mathbf{x} - \widehat{\mathbf{x}}\|_2^2],
\end{equation}
under relevant constrains. Here, $\| \cdot \|_2$ denotes the $\ell_2$ norm, and $\widehat{\mathbf{x}}$ represents the output of decoder (e.g, a  linear or non-linear estimator, sparse reconstruction algorithms  etc.). Some examples within this category are \cite{12:Chen,08:Jin,07:Schizas,14:Yuan}.

Following the third category, we are interested in the optimized sensing matrix design with respect to minimizing reconstruction MSE criterion given that the source can be represented in a sparse form with known statistical moments.

We study a scenario that a correlated sparse source vector (i.e., the non-zero components of the source signal are correlated) is scaled linearly and becomes corrupted by additive noise before compression/encoding via a CS-based sensing matrix. The resulting CS measurements are transmitted over a noisy (analog) communication channel, modeled by channel gain and additive noise, under available average transmit power constraint. At the receiving-end, the source signal is decoded using an estimator in order to reconstruct the sparse source. In this scenario, we aim at optimizing the sensing matrix with respect to minimizing a \textit{lower-bound} on the MSE incurred by using the MMSE estimator (which by definition minimizes the MSE) of a sparse source signal. We propose a two-stage sensing matrix optimization procedure that combines semi-definite relaxation (SDR) programming as well as low-rank approximation problem. The solution to the low-rank approximation problem can be derived analytically, further, the SDR programming can be solved using convex optimization techniques. Also, under certain conditions, we derive closed-form solutions to the SDR problem. Through numerical experiments, by applying practical sparse reconstruction algorithms, we compare our proposed scheme with other relevant methods. Experimental results show that the proposed approach improves the MSE performance by a large margin compared to other methods. This performance improvement is achieved at the price of higher computational complexity. In order to tackle the complexity issue, we develop an equivalent stochastic optimization method to the problem of interest, which can be approximately solved, and still shows a superior performance over the competing methods.

\vspace{-0.2cm}
\section{System Description} \label{sec:problem}

\subsection{System Model and Key Assumptions} \label{sec:sys model}

We study the setup shown in \figref{fig:diagram}.
\begin{figure} [t]
  \centering
  \psfrag{E}[][][0.75]{CS encoder}
  \psfrag{C}[][][0.75]{Channel}
   \psfrag{D}[][][0.75]{Decoder}
  \psfrag{x}[][][0.9]{$\mathbf{x}$}
  \psfrag{A}[][][0.9]{$\mathbf{A}$}
  \psfrag{y}[][][0.8]{$\mathbf{y}$}
  \psfrag{z}[][][0.8]{$\mathbf{z}$}
   \psfrag{v}[][][0.9]{$\mathbf{v}$}
  \psfrag{w}[][][0.9]{$\mathbf{w}$}
  \psfrag{H}[][][0.9]{$\mathbf{H}$}
  \psfrag{G}[][][0.9]{$\mathbf{G}$}
  \psfrag{h}[][][0.9]{$\widehat{\mathbf{x}}$}
    \includegraphics[width=9cm]{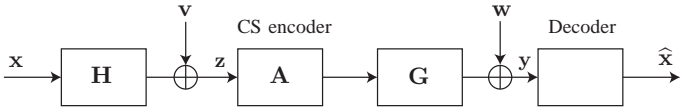}    
  \caption{System model for CS over a noisy communication channel.}\label{fig:diagram}
 \vspace{-0.5cm}
  \centering
\end{figure}
We consider a $K$-sparse (in a known basis) vector $\mathbf{x} \in \mathbb{R}^N$ which is comprised of exactly $K$ random non-zero components ($K \ll N$). We define the support set, i.e., the locations of the non-zero component, for the vector $\mathbf{x} \triangleq [x_1,\ldots,x_N]^\top$ as $\mathcal{S} \triangleq \{n \in \{1,2,\ldots,N\}: x_n \neq 0 \}$ with $|\mathcal{S}| = K$, where $|\cdot|$ denotes the cardinality of a set. We assume that the non-zero components of the source vector $\mathbf{x}$ are distributed according to a Gaussian distribution $\mathcal{N}(\mathbf{0}_K,\mathbf{R})$, where $\mathbf{R} = \mathbb{E}[\mathbf{x}_\mathcal{S} \mathbf{x}_\mathcal{S}^\top] \in \mathbb{R}^{K \times K}$ is the covariance matrix of the $K$ non-zero components of $\mathbf{x}$, and $\mathbf{x}_\mathcal{S} \in \mathbb{R}^K$ denotes the components of $\mathbf{x}$ indexed by the support set $\mathcal{S}$. Note that $\mathbf{R}$ is a positive definite matrix which is not necessarily scaled identity, i.e., the nonzero off-diagonal elements of $\mathbf{R}$ allow  the non-zero components of $\mathbf{x}$ to be correlated. The elements of the support set $\mathcal{S}$ are drawn uniformly at random from the set of all ${N \choose K}$ possibilities, denoted by $\Omega$, i.e., $|\Omega|= {N \choose K}$. In other words, $p(\mathcal{S}) = 1/{N \choose K}$, where $p(\mathcal{S})$ represents the probability that a support set $\mathcal{S}$ is chosen from the set $\Omega$. The uniform distribution is chosen for simplicity of presentation, however, extensions to other types of distributions  are straightforward. We also denote the covariance matrix of the whole sparse source vector by $\mathbf{R}_x \triangleq \mathbb{E}[\mathbf{xx}^\top] \in \mathbb{R}^{N \times N}$.

We model the uncertainty or mismatch in some physical aspect via a source-to-sensor channel described as following. The source is linearly scaled via a fixed matrix $\mathbf{H} \in \mathbb{R}^{L \times N}$ whose output is corrupted by an additive noise $\mathbf{v} \in \mathbb{R}^L$ uncorrelated with the source, where $\mathbf{v} \sim \mathcal{N}(\mathbf{0}_L,\sigma_v^2 \mathbf{I}_L)$. For transmission over noisy channel, the noisy observations should be compressed and then encoded. Here, we assume that the bandwidth of the noisy observation $\mathbf{z \triangleq Hx + v} \in \mathbb{R}^L$ is compressed via a full row-rank compressed sensing transformation matrix $\mathbf{A} \in \mathbb{R}^{M \times L}$, where $M < L$. We also assume that $M < N$. The compressed measurements are simultaneously encoded under the constraint of the available average transmit power, and then transmitted over a channel, represented by a fixed channel matrix $\mathbf{G} \in \mathbb{R}^{M \times M}$ and additive noise $\mathbf{w} \in \mathbb{R}^M$. We assume that the channel matrix is given by $\mathbf{G} = g \mathbf{I}_M$, and we let the additive channel noise be distributed as $\mathbf{w} \sim \mathcal{N}(\mathbf{0}_M,\sigma_w^2 \mathbf{I}_M)$, which is uncorrelated with the source $\mathbf{x}$ and source-to-sensor additive  noise $\mathbf{v}$. The rational behind the scaled identity assumption on the channel matrix is that there is no inter-symbol interference between message transmissions over the communication link, and the channel is assumed to remain constant during each observation period \cite{08:Jin}. This technical assumption also makes our design procedure tractable. In a more compact way, we write 
\begin{equation} \label{eq:measurement}
\begin{aligned}
	\mathbf{y} &= \mathbf{GAz + w} =   g\mathbf{AH x} + \underbrace{g\mathbf{Av + w}}_{\triangleq  \mathbf{n}}.& 
\end{aligned}
\end{equation}
Denoting the total noise in the system by $\mathbf{n} \triangleq g\mathbf{Av + w} \in \mathbb{R}^M$, its covariance matrix  $\mathbf{R}_n \in \mathbb{R}^{M \times M}$ can be calculated as 
\begin{equation} \label{eq:cov nosie}
\begin{aligned}
	\mathbf{R}_n \triangleq \mathbb{E}\{\mathbf{nn}^\top\} = g^2 \sigma_v^2 \mathbf{AA}^\top + \sigma_w^2 \mathbf{I}_M.
\end{aligned}
\end{equation}

Finally, at the receiving end, the decoder which is characterized by a (potentially non-linear) mapping $\mathbb{R}^M \!\rightarrow \! \mathbb{R}^N$ provides the estimate of the source from corrupted measurements. 

\vspace{-0.2cm}
\subsection{Developing MMSE Estimation and Lower-bound on MSE} \label{sec:perf}

We are interested in designing an optimized compressed sensing matrix $\mathbf{A}$ with respect to minimizing the MSE of sparse source reconstruction. Based on the aforementioned assumptions in \secref{sec:sys model}, it is possible (see e.g. \cite{09:Elad}) to find a closed-form expression for the MMSE estimation of the source given the received signal vector $\mathbf{y}$. The MMSE estimator, denoted by $\widehat{\mathbf{x}}^\star \!  \! \in \! \! \mathbb{R}^N$, minimizes the MSE by definition, and inherits the following structure (see e.g. \cite{09:Elad})
\begin{equation} \label{eq:struct MMSE}
	\widehat{\mathbf{x}}^\star = \sum_{\mathcal{S} \subset \Omega} \beta (\mathcal{S},\mathbf{y}) \mathbb{E}[\mathbf{x}| \mathbf{y},\mathcal{S}],
\end{equation}
where $\Omega$ is the set of all sparsity patterns, and $\beta(\mathcal{S},\mathbf{y})$ is the weighting coefficient (possibly non-linear in $\mathbf{y}$) such that $\sum_\mathcal{S} \beta(\mathcal{S},\mathbf{y}) \!=\! 1$. Further, $\mathbb{E}[\mathbf{x}| \mathbf{y},\mathcal{S}] $ is the conditional mean of the source given a possible support set $\mathcal{S}$ and observation $\mathbf{y}$. The conditional mean in \eqref{eq:struct MMSE}  can be expressed as (see  \cite{09:Elad}) $\mathbb{E}[\mathbf{x}| \mathbf{y},\mathcal{S}] = $
\begin{equation} \label{eq:oracle MMSE est}
	g \left(\mathbf{R}^{-1} \! + g^2   \left(\mathbf{H}^\top \mathbf{A}^\top \right)_\mathcal{S} \; \mathbf{R}_n^{-1} \; \left(\mathbf{A} \mathbf{H} \right)_\mathcal{S} \right)^{-1} \left(\mathbf{H}^\top \mathbf{A}^\top \right)_\mathcal{S} \; \mathbf{R}_n^{-1} \mathbf{y},
\end{equation}
where $(\cdot)_\mathcal{S}$ denotes the columns of a matrix indexed by the support set $\mathcal{S}$. The MMSE estimator \eqref{eq:struct MMSE} gives the lowest possible MSE for a sparse source in the system of \figref{fig:diagram}. However, the MSE itself  does not have a closed-form expression, which makes it difficult to find a tractable way in order to optimize the sensing matrix. Thus, we propose an alternative sensing matrix optimization method by minimizing a lower-bound on the MSE.


We bound the MSE of the MMSE estimator by that of the oracle MMSE estimator, i.e., an ideal estimator which has the perfect knowledge of the support set \textit{a priori}. By definition, the oracle estimator is calculated as the conditional expectation $\widehat{\mathbf{x}}^{(or)} \triangleq \mathbb{E}[\mathbf{x}| \mathbf{y},\mathcal{S}]$, as shown in \eqref{eq:oracle MMSE est}, given \textit{a priori} known (but random) support set $\mathcal{S}$ and noisy observations $\mathbf{y}$. Notice that the conditional expectation given the support set is Gaussian distributed which gives the following MSE
\begin{equation} \label{eq:oracle_MSE}
\begin{aligned}
	&\mathrm{MSE}^{(lb)} = \mathbb{E}[\| \mathbf{x} - \widehat{\mathbf{x}}^{(or)} \|_2^2] =  \mathbb{E}[\| \mathbf{x}_\mathcal{S} - \widehat{\mathbf{x}}_\mathcal{S}^{(or)} \|_2^2] &\\
	&\overset{(a)}{=} \sum_{\mathcal{S} \subset \Omega} p(\mathcal{S})  \Tr\left\{ \left(\mathbf{R}^{-1} + g^2 (\mathbf{H}^\top \mathbf{A}^\top)_\mathcal{S} \;\mathbf{R}_n^{-1} \; (\mathbf{A} \mathbf{H})_\mathcal{S} \right)^{-1} \right\},&
\end{aligned}
\end{equation}
where $(a)$ follows by averaging over all random supports sets. Further, $p(\mathcal{S})= 1 / {N \choose K}$.

To be able to formulate the MSE in \eqref{eq:oracle_MSE} in terms of the sensing matrix $\mathbf{A}$, as in \cite{12:Chen}, we define the matrix $\mathbf{E}_\mathcal{S} \in \mathbb{R}^{N \times K}$, which is formed by taking an  identity matrix of order $N \times N$ and deleting the columns indexed by the support set $\mathcal{S}$.
 Then, we rewrite 
\begin{equation} \label{eq:oracle_MSE_2}
	\mathrm{MSE}^{(lb)} \! \!= \! \sum_{\mathcal{S}} \! \frac{1}{{N \choose K}} \! \Tr \left\{ \left(\mathbf{R}^{-1} \!+\! g^2 \mathbf{E}_\mathcal{S}^\top \mathbf{H}^\top \mathbf{A}^\top \mathbf{R}_n^{\! -1} \mathbf{A} \mathbf{H} \mathbf{E}_\mathcal{S} \right)^{\!-1} \right\}.
\end{equation}

\vspace{-0.3cm}
\section{Design Methodology} \label{sec:design}
In this section, we offer a design method for optimization of the sensing matrix $\mathbf{A}$ with the objective of minimizing the lower-bound \eqref{eq:oracle_MSE_2}. The optimization is performed at the decoder, and we assume that the decoder knows the sensor observation models and source-to-sensor and sensor-to-decoder channels. 

We assume that the bandwidth is constrained, i.e., we have $M < N$ total number of observations. Also, 
the average transmit power can be bounded by the total available power $P$ as follows
\begin{equation} \label{eq:power}
\begin{aligned}
	\mathbb{E}[\|\mathbf{AHx + Av} \|_2^2] 
	&= \Tr \{\mathbf{AH}\mathbb{E}[\mathbf{xx}^\top]\mathbf{H}^\top \mathbf{A}^\top \! + \! \mathbf{A} \mathbb{E}[\mathbf{vv}^\top] \mathbf{A}^\top\}& \\
	&= \Tr \{\mathbf{AH} \mathbf{R}_x \mathbf{H}^\top \mathbf{A}^\top + \sigma_v^2 \mathbf{AA}^\top\} \leq P.&
\end{aligned}
\end{equation}

Minimizing the lower-bound \eqref{eq:oracle_MSE_2} subject to the average power constraint \eqref{eq:power} yields 
\begin{equation} \label{eq:opt 1}
\begin{aligned}
	&\underset{\mathbf{A}}{\text{minimize}} \hspace{0.25cm} \mathrm{MSE}^{(lb)}& \\
	& \text{subject to} \hspace{0.25cm} \Tr \{\mathbf{A} (\mathbf{HR}_x \mathbf{H}^\top + \sigma_v^2 \mathbf{I}_N) \mathbf{A}^\top\} \leq P .& 
\end{aligned}
\end{equation}
The optimal solution to  Problem \eqref{eq:opt 1} is equivalent to that of the optimization problem given by the following theorem.
  
\begin{theorem} \label{theo:sing_ter}
	Let $\mathbf{Q} \triangleq \mathbf{A}^\top \mathbf{A} \in \mathbb{R}^{L \times L}$, then the optimization problem \eqref{eq:opt 1} can be equivalently solved by 
	\begin{equation} \label{eq:opt 1_final}
\begin{aligned}
	&\underset{\mathbf{Q},\mathbf{X}_\mathcal{S},\mathbf{Y}}{\text{minimize}} \hspace{0.25cm} \sum_\mathcal{S} \Tr \{\mathbf{X}_\mathcal{S}\}	& \\
	&\text{subject to} \hspace{0.25cm} \left[
\begin{array}{c c}
   \mathbf{R}^{-1} +\mathbf{E}_\mathcal{S}^\top \mathbf{H}^\top  (\frac{g^2}{\sigma_w^2}  \mathbf{Q} -  \mathbf{Y} ) \mathbf{HE}_\mathcal{S}   & \mathbf{I}_K \\ 
  \mathbf{I}_K  &   \mathbf{X}_\mathcal{S}   \\
\end{array}
\right] \succeq \mathbf{0} &\\
	&\hspace{1.7cm}  \left[
	\begin{array}{c c}
	   \mathbf{Y} & \frac{g}{\sigma_w}\mathbf{Q} \\ 
	  \frac{g}{\sigma_w}  \mathbf{Q}  &  \frac{\sigma_w^2}{g^2 \sigma_v^2} \mathbf{I}_L  + \mathbf{Q} \\
	\end{array}
	\right] \succeq \mathbf{0} , \; \mathcal{S} \subset  \Omega & \\
	 &\hspace{1.7cm} \Tr \{(\mathbf{HR}_x \mathbf{H}^\top + \sigma_v^2 \mathbf{I}_L) \mathbf{Q}\} \leq P & \\
	&\hspace{1.7cm} \mathbf{Q} \succeq \mathbf{0}  , \hspace{0.15cm} \mathrm{rank}(\mathbf{Q}) = M,&
\end{aligned}
\end{equation}
where the matrices  $\mathbf{Q}$, $\mathbf{X}_\mathcal{S} \in \mathbb{R}^{K \times K}$ and $\mathbf{Y} \in \mathbb{R}^{L \times L}$ are optimization variables. 
\end{theorem}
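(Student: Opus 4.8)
The plan is to establish the equivalence in two stages: (i) rewrite Problem~\eqref{eq:opt 1} so that the only decision variable is the Gram matrix $\mathbf{Q}\triangleq\mathbf{A}^\top\mathbf{A}$, and (ii) linearize the two remaining nonlinearities---the trace-of-inverse in the objective and the matrix inverse hidden in $\mathbf{R}_n$---by Schur complements, which produces exactly the two linear matrix inequalities of \eqref{eq:opt 1_final}. For stage (i), note that $\mathbf{A}\in\mathbb{R}^{M\times L}$ has full row rank iff $\mathbf{Q}\succeq\mathbf{0}$ with $\mathrm{rank}(\mathbf{Q})=M$, and every such $\mathbf{Q}$ is realizable: writing $\mathbf{Q}=\mathbf{V}\mathbf{\Lambda}\mathbf{V}^\top$ with $\mathbf{\Lambda}\succ\mathbf{0}$ of order $M$, the choice $\mathbf{A}=\mathbf{\Lambda}^{1/2}\mathbf{V}^\top$ works. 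So it suffices to show the objective and the power constraint depend on $\mathbf{A}$ only through $\mathbf{Q}$. The power constraint is immediate from $\Tr\{\mathbf{A}\mathbf{B}\mathbf{A}^\top\}=\Tr\{\mathbf{B}\mathbf{Q}\}$. In the objective \eqref{eq:oracle_MSE_2}, $\mathbf{A}$ enters only through $g^2\mathbf{A}^\top\mathbf{R}_n^{-1}\mathbf{A}$ with $\mathbf{R}_n=g^2\sigma_v^2\mathbf{A}\mathbf{A}^\top+\sigma_w^2\mathbf{I}_M$ from \eqref{eq:cov nosie}; applying the push-through identity $(\sigma_w^2\mathbf{I}_M+g^2\sigma_v^2\mathbf{A}\mathbf{A}^\top)^{-1}\mathbf{A}=\mathbf{A}(\sigma_w^2\mathbf{I}_L+g^2\sigma_v^2\mathbf{A}^\top\mathbf{A})^{-1}$ gives
\[
g^2\mathbf{A}^\top\mathbf{R}_n^{-1}\mathbf{A}=g^2\mathbf{Q}\bigl(\sigma_w^2\mathbf{I}_L+g^2\sigma_v^2\mathbf{Q}\bigr)^{-1}\;\triangleq\;\mathbf{\Phi}(\mathbf{Q})\;\succeq\;\mathbf{0},
\]
so \eqref{eq:opt 1} becomes: minimize $\sum_\mathcal{S}\Tr\{(\mathbf{R}^{-1}+\mathbf{E}_\mathcal{S}^\top\mathbf{H}^\top\mathbf{\Phi}(\mathbf{Q})\mathbf{H}\mathbf{E}_\mathcal{S})^{-1}\}$ over $\mathbf{Q}\succeq\mathbf{0}$, $\mathrm{rank}(\mathbf{Q})=M$, $\Tr\{(\mathbf{H}\mathbf{R}_x\mathbf{H}^\top+\sigma_v^2\mathbf{I}_L)\mathbf{Q}\}\le P$ (the constant $1/{N\choose K}$ does not affect the minimizer).

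For stage (ii), I would put the trace-of-inverse in epigraph form: introduce $\mathbf{X}_\mathcal{S}\succeq(\mathbf{R}^{-1}+\mathbf{E}_\mathcal{S}^\top\mathbf{H}^\top\mathbf{\Phi}(\mathbf{Q})\mathbf{H}\mathbf{E}_\mathcal{S})^{-1}$ and minimize $\sum_\mathcal{S}\Tr\{\mathbf{X}_\mathcal{S}\}$, which is equivalent because the bracketed matrix is positive definite and $\mathbf{X}_\mathcal{S}=(\cdot)^{-1}$ is optimal; by the Schur complement this constraint is the first LMI of \eqref{eq:opt 1_final} once $\mathbf{\Phi}(\mathbf{Q})$ is replaced by $\tfrac{g^2}{\sigma_w^2}\mathbf{Q}-\mathbf{Y}$. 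The auxiliary variable $\mathbf{Y}$ carries the remaining inverse: a short computation (using that $\mathbf{Q}$ commutes with $(\sigma_w^2\mathbf{I}_L+g^2\sigma_v^2\mathbf{Q})^{-1}$) gives
\[
\tfrac{g^2}{\sigma_w^2}\mathbf{Q}-\mathbf{\Phi}(\mathbf{Q})=\tfrac{g^2}{\sigma_w^2}\,\mathbf{Q}\bigl(\tfrac{\sigma_w^2}{g^2\sigma_v^2}\mathbf{I}_L+\mathbf{Q}\bigr)^{-1}\mathbf{Q}\;\triangleq\;\mathbf{Y}^\star(\mathbf{Q})\;\succeq\;\mathbf{0},
\]
and since $\tfrac{\sigma_w^2}{g^2\sigma_v^2}\mathbf{I}_L+\mathbf{Q}\succ\mathbf{0}$, the Schur complement shows the second LMI of \eqref{eq:opt 1_final} is precisely $\mathbf{Y}\succeq\mathbf{Y}^\star(\mathbf{Q})$.

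Finally I must verify the relaxations are tight so that the two problems truly coincide. Fix a feasible $\mathbf{Q}$. Any feasible $\mathbf{Y}$ satisfies $\mathbf{Y}\succeq\mathbf{Y}^\star(\mathbf{Q})$, hence $\tfrac{g^2}{\sigma_w^2}\mathbf{Q}-\mathbf{Y}\preceq\mathbf{\Phi}(\mathbf{Q})$; moreover the identity-valued off-diagonal block in the first LMI forces $\mathbf{R}^{-1}+\mathbf{E}_\mathcal{S}^\top\mathbf{H}^\top(\tfrac{g^2}{\sigma_w^2}\mathbf{Q}-\mathbf{Y})\mathbf{H}\mathbf{E}_\mathcal{S}\succ\mathbf{0}$, so all inverses exist, and antitonicity of matrix inversion plus monotonicity of the trace give $\Tr\{\mathbf{X}_\mathcal{S}\}\ge\Tr\{(\mathbf{R}^{-1}+\mathbf{E}_\mathcal{S}^\top\mathbf{H}^\top\mathbf{\Phi}(\mathbf{Q})\mathbf{H}\mathbf{E}_\mathcal{S})^{-1}\}$ for every $\mathcal{S}$, with equality attained by $\mathbf{Y}=\mathbf{Y}^\star(\mathbf{Q})$ and $\mathbf{X}_\mathcal{S}=(\cdot)^{-1}$. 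Minimizing over $(\mathbf{X}_\mathcal{S},\mathbf{Y})$ and then over $\mathbf{Q}$---whose feasible set ($\mathbf{Q}\succeq\mathbf{0}$, $\mathrm{rank}(\mathbf{Q})=M$, power bound) is exactly the one from stage (i)---recovers ${N\choose K}\,\mathrm{MSE}^{(lb)}$ and hence the same minimizer $\mathbf{Q}^\star$ (equivalently $\mathbf{A}^\star$), which is the claim. The main obstacle, and the step I would carry out most carefully, is the algebra behind the two displayed identities: checking that the push-through and Woodbury manipulations reproduce precisely the off-diagonal block $\tfrac{g}{\sigma_w}\mathbf{Q}$ and the lower-right block $\tfrac{\sigma_w^2}{g^2\sigma_v^2}\mathbf{I}_L+\mathbf{Q}$ appearing in \eqref{eq:opt 1_final}; the remaining Schur-complement and monotonicity arguments are routine.
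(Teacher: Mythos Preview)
Your proposal is correct and follows essentially the same route as the paper: reduce the dependence on $\mathbf{A}$ to the Gram matrix $\mathbf{Q}=\mathbf{A}^\top\mathbf{A}$ (the paper applies the Woodbury identity to $\mathbf{R}_n^{-1}$ where you use the push-through identity, but both yield the same expression $\tfrac{g^2}{\sigma_w^2}\mathbf{Q}-\tfrac{g^2}{\sigma_w^2}\mathbf{Q}(\tfrac{\sigma_w^2}{g^2\sigma_v^2}\mathbf{I}_L+\mathbf{Q})^{-1}\mathbf{Q}$), and then introduce the slack variables $\mathbf{X}_\mathcal{S}$ and $\mathbf{Y}$ via two Schur complements. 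Your tightness argument---using antitonicity of the matrix inverse and the attainability of $\mathbf{Y}=\mathbf{Y}^\star(\mathbf{Q})$, $\mathbf{X}_\mathcal{S}=(\cdot)^{-1}$---is in fact more explicit than the paper's, which simply introduces the slack variables without justifying that the relaxation is exact at the optimum.
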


\begin{remark}
	The last two constraints in \eqref{eq:opt 1_final} appear due to the variable transformation $\mathbf{Q} = \mathbf{A}^\top \mathbf{A}$ which is a rank-$M$ positive semi-definite matrix. The difficulty of solving \eqref{eq:opt 1_final} is due to the rank constraint which makes the optimization problem non-convex in general. However, the constraint can be relaxed making the remaining problem  convex, a technique which is usually called semi-definite relaxation (SDR). Note also that the SDR can be only used to give a lower-bound on the optimal cost of the original objective function in \eqref{eq:opt 1_final}. 
\end{remark}

Next, we develop a \textit{two-stage} procedure in order to approximately solve for $\mathbf{A}$ in the non-convex optimization problem \eqref{eq:opt 1_final}. 

\textit{1) Semi-definite relaxation (SDR):} We first ignore the rank constraint in \eqref{eq:opt 1_final}, and solve the convex SDR problem for the matrix $\mathbf{Q}$. In some cases, closed-form solutions exist which we discuss later in this section. After finding the optimal $\mathbf{Q}^\star$, we take the eigen-value decomposition (EVD)
$\mathbf{Q}^\star = \mathbf{U}_q \mathbf{\Gamma}_q \mathbf{U}_q^\top$, where $\mathbf{U}_q \in \mathbb{R}^{L \times L}$ is a unitary matrix and $\mathbf{\Gamma}_q = \mathrm{diag} \left(\gamma_{q_1},  \ldots , \gamma_{q_L} \right) \in \mathbb{R}^{L \times L}$ such that $\gamma_{q_1} > \ldots > \gamma_{q_L}$. 

 \textit{2) Low-rank approximation:} We approximately reconstruct the rank--$M$ sensing matrix $\mathbf{A}$ by solving 
	\begin{equation} \label{eq:opt_rec_A_appx}
	\begin{aligned}
		&\underset{\mathbf{A}}{{\text{minimize}}} \hspace{0.25cm} \|   \mathbf{A}^\top \mathbf{A} - \mathbf{Q} \|_F^2.&
	\end{aligned}
	\end{equation}
It can be shown that the optimal sensing matrix $\mathbf{A}^\star$ (with respect to \eqref{eq:opt_rec_A_appx}) has the following structure \cite{11:Kokiopoulou}:

\begin{equation} \label{eq:recstr A}
	\mathbf{A}^\star = \mathbf{U}_a \left[\mathrm{diag}(\sqrt{\gamma_{q_1}} , \ldots , \sqrt{\gamma_{q_M}}) \;\; \mathbf{0}_{M \times (L-M)}\right] \mathbf{U}_q^\top,
\end{equation}
where $\mathbf{U}_a \! \in \! \mathbb{R}^{M�\! \times \! M}$ is an arbitrary unitary matrix. 

Since the eigenvalues $\gamma_{q_{M+1}},\! \ldots \!, \gamma_{q_N}$ are dropped in \eqref{eq:recstr A}, we finally scale the resulting $\mathbf{A}^\star$ by $\sqrt{P / \Tr\{(\mathbf{H} \mathbf{R}_x \mathbf{H}^\top \!+\! \sigma_v^2 \mathbf{I}_L)\mathbf{A}^{\star \top} \mathbf{A}^{\star}\}}$ in order to satisfy the power constraint by equality.

Next, we investigate the optimization problem \eqref{eq:opt 1_final} in several special cases. 

\subsubsection{Special Case I ($\mathbf{R} \! = \! \sigma_x^2 \mathbf{I}_K$, $\mathbf{H} \!=\! \mathbf{I}_N$)} \label{sec:special case1}

The motivation is to study a case where the non-zero components of the sparse source are uncorrelated, i.e., $\mathbf{R} \!=\! \sigma_x^2 \mathbf{I}_K$ and the source-to-sensor channel is only subject to additive noise, i.e., $\mathbf{H} \!=\! \mathbf{I}_N$. 
\begin{proposition} \label{prop:sepc_1}
	Let $\mathbf{R} = \sigma_x^2 \mathbf{I}_K$ and $\mathbf{H} = \mathbf{I}_N$, then the solution to the two-stage optimization procedure is given by
	\begin{equation} \label{eq:closed_spec1_4_proof}
		\mathbf{A}^\star = \sqrt{\frac{KP}{M(\sigma_x^2 + K \sigma_v^2)}} \; \mathbf{U}_a [\mathbf{I}_M \; \; \mathbf{0}_{M \times 		(N-M)}] \mathbf{V}_a^\top, 
	\end{equation}
	where $\mathbf{U}_a \in \mathbb{R}^{M \times M}$ and $\mathbf{V}_a \in \mathbb{R}^{N \times N}$ are arbitrary unitary matrices. 
\end{proposition}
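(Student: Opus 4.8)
The plan is to run the two-stage procedure of \secref{sec:design} under the hypotheses $\mathbf{R}=\sigma_x^2\mathbf{I}_K$, $\mathbf{H}=\mathbf{I}_N$ (so $L=N$), and to use the resulting coordinate symmetry to force each stage into the claimed tight-frame form.

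\emph{Stage 1 (SDR).} First I would eliminate the auxiliary variables in \eqref{eq:opt 1_final}. For fixed $\mathbf{Q},\mathbf{Y}$, the Schur complement of the first linear matrix inequality together with the minimization of $\sum_\mathcal{S}\Tr\{\mathbf{X}_\mathcal{S}\}$ forces $\mathbf{X}_\mathcal{S}=(\mathbf{R}^{-1}+\mathbf{E}_\mathcal{S}^\top\mathbf{H}^\top(\tfrac{g^2}{\sigma_w^2}\mathbf{Q}-\mathbf{Y})\mathbf{H}\mathbf{E}_\mathcal{S})^{-1}$; since $\Tr\{(\cdot)^{-1}\}$ composed with a congruence is decreasing in $\tfrac{g^2}{\sigma_w^2}\mathbf{Q}-\mathbf{Y}$ in the positive semidefinite order, the second inequality should be made tight, giving $\mathbf{Y}=\tfrac{g^2}{\sigma_w^2}\mathbf{Q}(\tfrac{\sigma_w^2}{g^2\sigma_v^2}\mathbf{I}_N+\mathbf{Q})^{-1}\mathbf{Q}$ and hence, after a push-through identity, $\tfrac{g^2}{\sigma_w^2}\mathbf{Q}-\mathbf{Y}=(\tfrac{\sigma_w^2}{g^2}\mathbf{Q}^{-1}+\sigma_v^2\mathbf{I}_N)^{-1}=:\mathbf{M}(\mathbf{Q})$, which is precisely $g^2\mathbf{A}^\top\mathbf{R}_n^{-1}\mathbf{A}$ rewritten in terms of $\mathbf{Q}$. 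Dropping the rank constraint, the relaxed problem becomes
\begin{equation*}
\underset{\mathbf{Q}\succeq\mathbf{0}}{\text{minimize}}\ \sum_\mathcal{S}\Tr\!\left\{\!\left(\tfrac{1}{\sigma_x^2}\mathbf{I}_K+\mathbf{E}_\mathcal{S}^\top\mathbf{M}(\mathbf{Q})\mathbf{E}_\mathcal{S}\right)^{\!-1}\right\},
\end{equation*}
subject to the power constraint, which with $\mathbf{R}_x=\tfrac{K\sigma_x^2}{N}\mathbf{I}_N$ (a one-line consequence of the uniform-support model and $\mathbf{R}=\sigma_x^2\mathbf{I}_K$) reads $(\tfrac{K\sigma_x^2}{N}+\sigma_v^2)\Tr\{\mathbf{Q}\}\le P$. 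Because $\mathbf{M}(\cdot)$ is operator concave and increasing (a parallel sum of $\tfrac{g^2}{\sigma_w^2}\mathbf{Q}$ and $\tfrac{1}{\sigma_v^2}\mathbf{I}_N$) while $\mathbf{X}\mapsto\Tr\{\mathbf{X}^{-1}\}$ is convex and decreasing, this problem is convex, and since the objective is decreasing in $\mathbf{Q}$ the power constraint is active at any optimum.

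\emph{Stage 1, the key step.} Both the objective and the feasible set are invariant under $\mathbf{Q}\mapsto\mathbf{\Pi}\mathbf{Q}\mathbf{\Pi}^\top$ for every permutation $\mathbf{\Pi}$ (permuting coordinates merely relabels the supports, over which we sum). By convexity one may average an optimizer over all permutations, so there is a permutation-invariant optimizer; I would then verify that the scaled identity $\mathbf{Q}^\star=q\mathbf{I}_N$, with $q$ chosen so the power constraint holds with equality, satisfies the Karush--Kuhn--Tucker conditions of the convex problem. The point is that $\mathbf{M}(q\mathbf{I}_N)=m\mathbf{I}_N$ for a scalar $m$, so $\mathbf{E}_\mathcal{S}^\top\mathbf{M}(q\mathbf{I}_N)\mathbf{E}_\mathcal{S}=m\mathbf{I}_K$; the gradient of each summand with respect to $\mathbf{M}$ is then a multiple of $\mathbf{I}_N$ (its off-diagonal entries vanish because the inner matrix is diagonal, and its diagonal entries coincide by symmetry), and $D\mathbf{M}(q\mathbf{I}_N)$ is itself a scalar multiple of the identity map, so the gradient of the objective at $\mathbf{Q}=q\mathbf{I}_N$ is proportional to $\mathbf{I}_N$ --- parallel to the gradient $(\tfrac{K\sigma_x^2}{N}+\sigma_v^2)\mathbf{I}_N$ of the active power constraint. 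As $q\mathbf{I}_N$ lies in the interior of the positive semidefinite cone, the multiplier of $\mathbf{Q}\succeq\mathbf{0}$ vanishes, the conditions hold, and convexity makes $\mathbf{Q}^\star=q\mathbf{I}_N$ globally optimal for the SDR.

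\emph{Stage 2 (low-rank approximation and rescaling).} Since $\mathbf{Q}^\star=q\mathbf{I}_N$ has all eigenvalues equal to $q$, its eigenvalue decomposition is $\mathbf{U}_q(q\mathbf{I}_N)\mathbf{U}_q^\top$ for an arbitrary unitary $\mathbf{U}_q$; substituting $\gamma_{q_1}=\cdots=\gamma_{q_M}=q$ into \eqref{eq:recstr A} gives $\mathbf{A}^\star=\sqrt{q}\,\mathbf{U}_a[\mathbf{I}_M\ \mathbf{0}_{M\times(N-M)}]\mathbf{U}_q^\top$ with $\mathbf{U}_a$ and $\mathbf{U}_q$ arbitrary unitary matrices (rename $\mathbf{U}_q=:\mathbf{V}_a$). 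The prescribed rescaling by $\sqrt{P/\Tr\{(\mathbf{R}_x+\sigma_v^2\mathbf{I}_N)\mathbf{A}^{\star\top}\mathbf{A}^\star\}}$ --- noting $\mathbf{A}^{\star\top}\mathbf{A}^\star$ has $M$ eigenvalues equal to $q$ and the rest zero --- cancels the dependence on $q$ and leaves $\mathbf{A}^\star$ of the stated form, with the common squared singular value fixed by the power budget \eqref{eq:power}, i.e. \eqref{eq:closed_spec1_4_proof}. The only genuinely non-routine point is the previous paragraph: permutation averaging by itself leaves a residual optimizer of the form $a\mathbf{I}_N+b\mathbf{1}_N\mathbf{1}_N^\top$, and ruling out $b\neq0$ is exactly what the gradient/optimality computation achieves --- and it relies on the diagonal structure of $\mathbf{M}(q\mathbf{I}_N)$, which is precisely what the special-case hypotheses $\mathbf{R}=\sigma_x^2\mathbf{I}_K$, $\mathbf{H}=\mathbf{I}_N$ guarantee.
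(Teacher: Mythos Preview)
Your argument is correct and reaches the tight-frame conclusion by a genuinely different route from the paper. The paper does not invoke convexity, permutation symmetry, or KKT conditions; instead, with $\mathbf{H}=\mathbf{I}_N$ and $\mathbf{R}=\sigma_x^2\mathbf{I}_K$, it applies the elementary inequality $\Tr\{\mathbf{B}^{-1}\}\ge K^2/\Tr\{\mathbf{B}\}$ (with equality iff $\mathbf{B}$ is a scalar multiple of $\mathbf{I}_K$) to each summand of the SDR objective, observes that the choice $\mathbf{Q}=\alpha\mathbf{I}_N$ makes every inner matrix $\tfrac{1}{\sigma_x^2}\mathbf{I}_K+\mathbf{E}_\mathcal{S}^\top(\cdot)\mathbf{E}_\mathcal{S}$ a scaled identity (since $\mathbf{E}_\mathcal{S}^\top\mathbf{E}_\mathcal{S}=\mathbf{I}_K$), and concludes directly that the SDR optimum is a scaled identity with $\alpha$ fixed by the power constraint; Stage~2 then proceeds exactly as you describe. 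Your approach is more machinery but more airtight: the paper's passage from ``the trace-inverse lower bound is attained with equality at $\alpha\mathbf{I}_N$'' to ``the objective is minimised there'' is asserted rather than justified, whereas your KKT computation at $q\mathbf{I}_N$ certifies global optimality for the convex relaxation and simultaneously dispatches the residual $a\mathbf{I}_N+b\mathbf{1}_N\mathbf{1}_N^\top$ ambiguity you flag. The paper's device, in return, is short and avoids differentiating through $\mathbf{M}(\mathbf{Q})$.

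One bookkeeping remark on your last line: your formula $\mathbf{R}_x=\tfrac{K\sigma_x^2}{N}\mathbf{I}_N$ is the correct evaluation of $\mathbb{E}[\mathbf{xx}^\top]$ under the uniform-support model, but the paper's power-constraint simplification (via its \lemref{lem:prop}) effectively uses $\tfrac{\sigma_x^2}{K}\mathbf{I}_N$, and it is \emph{that} normalisation which yields the prefactor in \eqref{eq:closed_spec1_4_proof}. Carrying your own $\mathbf{R}_x$ through the rescaling gives $\sqrt{NP/(M(K\sigma_x^2+N\sigma_v^2))}$; the tight-frame structure is identical, but your final sentence cannot literally land on \eqref{eq:closed_spec1_4_proof} without adopting the paper's normalisation.
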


\begin{remark}
	The structure of the sensing matrix in \eqref{eq:closed_spec1_4_proof} is normally referred to as \textit{tight frame}. Such structure is also optimal in certain cases, for example, the optimality of a tight frame-structured sensing matrix has been shown in \cite{12:Chen} with respect to minimizing LS-based oracle estimator. Another advantage of tight frames is that they can be efficiently constructed in finite number of arithmetic operations. 
\end{remark}

\subsubsection{Special Case II ($\mathbf{R} = \sigma_x^2 \mathbf{I}_K$, $\mathbf{v} = \mathbf{0}, \mathbf{H}:$ square full rank)} \label{sec:special case1}
Now, we discuss a case where the non-zero components of the sparse source are uncorrelated, i.e., $\mathbf{R} = \sigma_x^2 \mathbf{I}_K$, the observations before encoding are noiseless, i.e., $\mathbf{v=0}$ and $\mathbf{H}$ is a full-rank matrix. 
\begin{proposition} \label{prop:sepc_2}
	Let $\mathbf{R} = \sigma_x^2 \mathbf{I}_K$ and $\mathbf{v} = \mathbf{0}$, and consider that $\mathbf{H}$ is a square full-rank matrix such that its SVD can be written as $\mathbf{H} = \mathbf{U}_h \mathbf{\Gamma}_h  \mathbf{V}_h^\top$, where $ \mathbf{U}_h$ and  $\mathbf{V}_h$ are $N \times N$ unitary matrices and $\mathbf{\Gamma}_h = \mathrm{diag}(\gamma_{h_1},\gamma_{h_2}, \ldots, \gamma_{h_N} )$ is a diagonal matrix containing singular values $\gamma_{h_1}<\gamma_{h_2}< \ldots < \gamma_{h_N}$. Then, the solution to the two-stage optimization procedure is given by
	\begin{equation} \label{eq:closed_spec2_4}
		\mathbf{A}^\star = \sqrt{\frac{KP}{M\sigma_x^2 }} \; \mathbf{U}_a [\mathbf{\Gamma}_a \; \; \mathbf{0}_{M \times (N-M)}]  \mathbf{U}_h^\top, 
	\end{equation}
	where $\mathbf{U}_a \in \mathbb{R}^{M \times M}$ is an arbitrary unitary matrix, and $\mathbf{\Gamma}_a = \mathrm{diag}(\gamma_{h_1}^{-1},\ldots,\gamma_{h_M}^{-1})$. 
\end{proposition}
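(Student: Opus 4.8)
\emph{Proof sketch (plan).} The idea is to turn the hypotheses $\mathbf{R}=\sigma_x^2\mathbf{I}_K$ and $\mathbf{v}=\mathbf{0}$ into two simplifications, reduce the relaxed (SDR) problem to one whose optimum is a scaled identity in a transformed variable, and then carry that optimum through the two deterministic stages. First, since $\mathbf{v}=\mathbf{0}$ gives $\sigma_v^2=0$, the noise covariance $\mathbf{R}_n=\sigma_w^2\mathbf{I}_M$ no longer depends on $\mathbf{A}$; hence in \eqref{eq:opt 1_final} the auxiliary variable $\mathbf{Y}$ and the second LMI — whose sole role was to linearize the $\mathbf{A}$-dependence of $\mathbf{R}_n^{-1}$ in the derivation of \theoref{theo:sing_ter} — drop out (take $\mathbf{Y}=\mathbf{0}$), and, using also $\mathbf{R}^{-1}=\sigma_x^{-2}\mathbf{I}_K$, the relaxation step of the two-stage procedure becomes
\begin{equation}
\begin{aligned}
&\underset{\mathbf{Q}}{\text{minimize}}\ \ \sum_{\mathcal{S}}\Tr\!\left\{\left(\tfrac{1}{\sigma_x^2}\mathbf{I}_K+\tfrac{g^2}{\sigma_w^2}\,\mathbf{E}_\mathcal{S}^\top\mathbf{H}^\top\mathbf{Q}\,\mathbf{H}\mathbf{E}_\mathcal{S}\right)^{\!-1}\right\}\\
&\text{subject to}\ \ \mathbf{Q}\succeq\mathbf{0},\qquad \Tr\{\mathbf{H}\mathbf{R}_x\mathbf{H}^\top\mathbf{Q}\}\le P .
\end{aligned}
\end{equation}

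Next I would substitute $\mathbf{B}\triangleq\mathbf{H}^\top\mathbf{Q}\mathbf{H}$. Since $\mathbf{H}$ is square and nonsingular, this is a bijection of the positive semidefinite cone onto itself; moreover $\mathbf{E}_\mathcal{S}^\top\mathbf{H}^\top\mathbf{Q}\mathbf{H}\mathbf{E}_\mathcal{S}=\mathbf{E}_\mathcal{S}^\top\mathbf{B}\mathbf{E}_\mathcal{S}$ is the $\mathcal{S}$-indexed principal submatrix of $\mathbf{B}$, and by cyclicity of the trace, together with $\mathbf{R}_x=\tfrac{K}{N}\sigma_x^2\mathbf{I}_N$ (which is what $\mathbf{R}=\sigma_x^2\mathbf{I}_K$ yields under the uniform support model), the power constraint reads $\tfrac{K}{N}\sigma_x^2\,\Tr\{\mathbf{B}\}\le P$. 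In the variable $\mathbf{B}$ the matrix $\mathbf{H}$ has thus disappeared entirely, and the problem is convex (a sum of compositions of the convex map $\mathbf{M}\mapsto\Tr\{\mathbf{M}^{-1}\}$ on the positive definite cone with affine maps of $\mathbf{B}$). I would then show that a minimizer can be taken to be a scaled identity $\mathbf{B}^\star=b\mathbf{I}_N$: both the objective and the constraint are invariant under $\mathbf{B}\mapsto\mathbf{\Pi}\mathbf{B}\mathbf{\Pi}^\top$ for every signed permutation matrix $\mathbf{\Pi}$, because the support is uniform over all $K$-subsets (so the family $\{\mathbf{E}_\mathcal{S}\}$ is merely permuted) and $\sigma_x^{-2}\mathbf{I}_K$ is invariant under conjugation by the induced $K\times K$ signed permutation; averaging any minimizer over this finite group gives, by convexity, a minimizer whose diagonal entries are all equal (permutation invariance) and whose off-diagonal entries vanish (sign-flip invariance), i.e. $b\mathbf{I}_N$. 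Monotonicity of the objective in the positive semidefinite order then makes the power constraint active, fixing $b$, and undoing the substitution gives $\mathbf{Q}^\star=b\,\mathbf{H}^{-\top}\mathbf{H}^{-1}=b(\mathbf{H}\mathbf{H}^\top)^{-1}=b\,\mathbf{U}_h\mathbf{\Gamma}_h^{-2}\mathbf{U}_h^\top$. This symmetrization step — the same device behind \proref{prop:sepc_1} — is the one genuinely non-mechanical ingredient; everything else is bookkeeping.

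Finally I would run the remaining two stages on $\mathbf{Q}^\star$. Its eigenvalue decomposition is $\mathbf{U}_h\,(b\mathbf{\Gamma}_h^{-2})\,\mathbf{U}_h^\top$, and since $\gamma_{h_1}<\cdots<\gamma_{h_N}$ the eigenvalues in decreasing order are $b\gamma_{h_1}^{-2}>\cdots>b\gamma_{h_N}^{-2}$ with eigenvector matrix $\mathbf{U}_h$; plugging this into \eqref{eq:recstr A} (with $L=N$) yields $\mathbf{A}^\star=\sqrt{b}\,\mathbf{U}_a\,[\,\mathbf{\Gamma}_a\ \ \mathbf{0}_{M\times(N-M)}\,]\,\mathbf{U}_h^\top$ with $\mathbf{\Gamma}_a=\mathrm{diag}(\gamma_{h_1}^{-1},\dots,\gamma_{h_M}^{-1})$ and $\mathbf{U}_a$ the free unitary of \eqref{eq:recstr A}. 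The closing power rescaling by $\sqrt{P/\Tr\{\mathbf{H}\mathbf{R}_x\mathbf{H}^\top\mathbf{A}^{\star\top}\mathbf{A}^\star\}}$ is a one-line computation — $\mathbf{H}\mathbf{H}^\top\mathbf{A}^{\star\top}\mathbf{A}^\star=b\,\mathbf{U}_h\,\mathrm{diag}(\mathbf{I}_M,\mathbf{0})\,\mathbf{U}_h^\top$, so the relevant trace is proportional to $M$ — and it fixes the overall multiplicative scalar, after which collecting terms gives exactly \eqref{eq:closed_spec2_4}, with $\mathbf{U}_a$ arbitrary. It then only remains to note the routine facts that the feasible set is compact ($\mathbf{R}_x\succ\mathbf{0}$ makes the power inequality define a bounded subset of the PSD cone, so a minimizer exists) and that the chain ``drop the rank constraint $\to$ take the EVD $\to$ apply \eqref{eq:recstr A} $\to$ rescale'' is exactly the two-stage procedure as defined, applied to this instance.
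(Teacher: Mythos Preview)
Your plan is sound and tracks the paper's proof in outline: simplify the SDR under $\mathbf{v}=\mathbf{0}$ and $\mathbf{R}=\sigma_x^2\mathbf{I}_K$, show that the relaxed optimum satisfies $\mathbf{H}^\top\mathbf{Q}^\star\mathbf{H}=\alpha\mathbf{I}_N$, read off $\mathbf{Q}^\star=\alpha(\mathbf{H}\mathbf{H}^\top)^{-1}=\alpha\,\mathbf{U}_h\mathbf{\Gamma}_h^{-2}\mathbf{U}_h^\top$, and then apply \eqref{eq:recstr A} and the power rescaling. The one substantive difference is how the scaled-identity optimum is established. The paper invokes the inequality $\Tr\{\mathbf{B}^{-1}\}\ge K^2/\Tr\{\mathbf{B}\}$ (equality iff $\mathbf{B}\propto\mathbf{I}$) termwise and observes that the choice $\mathbf{H}^\top\mathbf{Q}\mathbf{H}=\alpha\mathbf{I}_N$ makes every block $\sigma_x^{-2}\mathbf{I}_K+(g^2/\sigma_w^2)\,\mathbf{E}_\mathcal{S}^\top\mathbf{H}^\top\mathbf{Q}\mathbf{H}\mathbf{E}_\mathcal{S}$ a scaled identity simultaneously; this is short, but as written it leaves implicit why attaining equality in these $\mathbf{Q}$-dependent lower bounds actually minimizes the sum under the trace budget. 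Your route---change variables to $\mathbf{B}=\mathbf{H}^\top\mathbf{Q}\mathbf{H}$, note that both the objective and the constraint $\Tr\{\mathbf{B}\}\le\text{const}$ are invariant under $\mathbf{B}\mapsto\mathbf{\Pi}\mathbf{B}\mathbf{\Pi}^\top$ for every signed permutation $\mathbf{\Pi}$ (using the uniform support prior and the isotropy of $\mathbf{R}$), and average a minimizer over this finite group using convexity---closes that gap cleanly and makes explicit which hypotheses are doing the work. The EVD, truncation, and rescaling steps are then identical to the paper's. One small caution on the final arithmetic: with your $\mathbf{R}_x=(K/N)\sigma_x^2\mathbf{I}_N$ the rescaling produces the prefactor $\sqrt{NP/(MK\sigma_x^2)}$, whereas the paper obtains $\sqrt{KP/(M\sigma_x^2)}$ via the second identity of \lemref{lem:prop}; this is a normalization discrepancy with the paper's conventions, not a flaw in your argument.
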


\begin{remark}
	According to \eqref{eq:closed_spec2_4} in \proref{prop:sepc_2}, the effective received measurement matrix at the decoder, i.e., $g \mathbf{AH}$, has a tight-frame structure. Interestingly, it can also be shown (see e.g. \cite{11:Duarte}) that the optimized sensing matrix derived in \eqref{eq:closed_spec2_4}, without the scaling factor, coincides with the optimal solution to the optimization problem
$\underset{\mathbf{A}}{{\text{minimize}}} \hspace{0.25cm} \| \mathbf{H}^	\top \mathbf{A}^\top \mathbf{A} \mathbf{H}  - \mathbf{I}_N \|_F ,$
which belongs to the second category of sensing matrix design problems introduced in \secref{sec:intro}.
	
\end{remark}

\subsubsection{Special Case III ($\mathbf{w= 0} $, $\mathbf{H} = \mathbf{I}_N$, $\mathbf{R} = \sigma_x^2 \mathbf{I}_K$)} \label{sec:special case3}
Here, we investigate a case where the additive channel noise in the system is negligible, i.e., $\mathbf{w=0}$,   the observations before encoding are only subject to additive noise, i.e., $\mathbf{H} = \mathbf{I}_N$, and the non-zero components of the sparse source vector are uncorrelated, i.e., $\mathbf{R} = \sigma_x^2 \mathbf{I}_K$. In this case, the optimal sensing matrix to the original problem \eqref{eq:opt 1} is given by the following proposition. 
\begin{proposition} \label{prop:sepc_3}
	Let $\mathbf{w= 0} $, $\mathbf{H} = \mathbf{I}_N$, $\mathbf{R} = \sigma_x^2 \mathbf{I}_K$, then, the solution to the optimization problem \eqref{eq:opt 1} is given by
	\begin{equation} \label{eq:spec_case_3}
	\mathbf{A}^\star = \sqrt{\frac{KP}{M \sigma_x^2}} \mathbf{U}_a [\mathbf{I}_{M} \: \: \mathbf{0}_{M \times (N-M)}],
\end{equation}
where $\mathbf{U}_a \in \mathbb{R}^{M \times M}$ is an arbitrary unitary matrix.  
\end{proposition}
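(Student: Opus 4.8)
The plan is to specialise Problem~\eqref{eq:opt 1} to the present hypotheses, reduce it to an optimisation over rank-$M$ orthogonal projectors, and read off the optimiser. First I would set $\mathbf{w}=\mathbf{0}$, so that $\sigma_w^2=0$ and, by~\eqref{eq:cov nosie}, $\mathbf{R}_n=g^2\sigma_v^2\mathbf{A}\mathbf{A}^\top$; combining this with $\mathbf{H}=\mathbf{I}_N$ (hence $L=N$) and $\mathbf{R}=\sigma_x^2\mathbf{I}_K$, and using that $\mathbf{A}$ has full row rank, gives
\[
g^2\mathbf{A}^\top\mathbf{R}_n^{-1}\mathbf{A}=\tfrac{1}{\sigma_v^2}\,\mathbf{A}^\top(\mathbf{A}\mathbf{A}^\top)^{-1}\mathbf{A}=\tfrac{1}{\sigma_v^2}\,\mathbf{P},
\]
where $\mathbf{P}\triangleq\mathbf{A}^\top(\mathbf{A}\mathbf{A}^\top)^{-1}\mathbf{A}$ is the orthogonal projector onto the row space of $\mathbf{A}$, a symmetric idempotent of rank $M$ with $\mathbf{0}\preceq\mathbf{P}\preceq\mathbf{I}_N$. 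Substituting into~\eqref{eq:oracle_MSE_2} yields
\[
\mathrm{MSE}^{(lb)}=\frac{1}{{N\choose K}}\sum_{\mathcal{S}}\Tr\!\left\{\left(\tfrac{1}{\sigma_x^2}\mathbf{I}_K+\tfrac{1}{\sigma_v^2}\,\mathbf{E}_\mathcal{S}^\top\mathbf{P}\,\mathbf{E}_\mathcal{S}\right)^{-1}\right\},
\]
which depends on $\mathbf{A}$ only through $\mathbf{P}$. In particular the cost is invariant under $\mathbf{A}\mapsto t\mathbf{A}$ for $t\neq0$, so the power constraint in~\eqref{eq:opt 1} does not influence the choice of the optimal row space; it merely fixes the overall scale of $\mathbf{A}$, which I would impose at the end by rescaling so that~\eqref{eq:power} holds with equality.

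Next I would minimise the reduced cost over rank-$M$ orthogonal projectors $\mathbf{P}$. Writing $\mathbf{P}=\mathbf{B}\mathbf{B}^\top$ with $\mathbf{B}^\top\mathbf{B}=\mathbf{I}_M$ and applying the matrix inversion lemma, the $\mathcal{S}$-th summand equals $K\sigma_x^2-\sigma_x^4\sum_j \mu_j/(\sigma_v^2+\sigma_x^2\mu_j)$, where $\mu_j\in[0,1]$ are the eigenvalues of $\mathbf{E}_\mathcal{S}^\top\mathbf{P}\,\mathbf{E}_\mathcal{S}$; since $\mu\mapsto\mu/(\sigma_v^2+\sigma_x^2\mu)$ is increasing and concave while $\sum_\mathcal{S}\Tr\{\mathbf{E}_\mathcal{S}^\top\mathbf{P}\,\mathbf{E}_\mathcal{S}\}={N-1\choose K-1}M$ is independent of $\mathbf{P}$, minimising the average forces the principal $K\times K$ submatrices of $\mathbf{P}$ to be made as spread-uniform as possible over all $\mathcal{S}$. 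This is precisely the $\sigma_w^2=0$ instance of the situation treated in \proref{prop:sepc_1} (whose hypotheses $\mathbf{R}=\sigma_x^2\mathbf{I}_K$, $\mathbf{H}=\mathbf{I}_N$ are in force here), and the extremal $\mathbf{P}$ is the one for which $\mathbf{A}$ is a tight frame; hence an optimal sensing matrix can be taken of the form $\mathbf{A}^\star=c\,\mathbf{U}_a[\mathbf{I}_M\ \ \mathbf{0}_{M\times(N-M)}]$ with $\mathbf{U}_a$ an arbitrary unitary matrix (a right unitary factor being immaterial once the scale is fixed).

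It remains to determine $c$ from the power budget. By the symmetry of the sparse prior (uniform support, isotropic nonzero block) the source covariance $\mathbf{R}_x$ is a multiple of $\mathbf{I}_N$, so with $\mathbf{H}=\mathbf{I}_N$ the constraint~\eqref{eq:power} collapses to a bound on $\|\mathbf{A}^\star\|_F^2=c^2 M$; imposing equality then pins $c$ to the value $\sqrt{KP/(M\sigma_x^2)}$ appearing in~\eqref{eq:spec_case_3}.

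The step I expect to be the main obstacle is establishing that the tight-frame/coordinate-type projector is the \emph{global} minimiser over all rank-$M$ projectors and not merely a stationary point, since the reduced cost depends genuinely on the row space of $\mathbf{A}$. I would attack this by combining the convexity and order-reversing property of $\mathbf{M}\mapsto\Tr\{\mathbf{M}^{-1}\}$ with a Schur--Horn majorisation argument on the diagonals of $\mathbf{E}_\mathcal{S}^\top\mathbf{P}\,\mathbf{E}_\mathcal{S}$ averaged over $\mathcal{S}$, leaning on the tight-frame optimality already invoked in \proref{prop:sepc_1}. A secondary caveat is that \theoref{theo:sing_ter} is not directly applicable: its SDR reformulation~\eqref{eq:opt 1_final} carries $\sigma_w^2$ in several denominators and degenerates at $\mathbf{w}=\mathbf{0}$, so the whole argument has to be run on~\eqref{eq:opt 1} itself, as above.
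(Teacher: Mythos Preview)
Your reduction of \eqref{eq:opt 1} to an optimisation over rank-$M$ orthogonal projectors $\mathbf{P}=\mathbf{A}^\top(\mathbf{AA}^\top)^{-1}\mathbf{A}$, with the power constraint only fixing the overall scale, is correct and matches the opening of the paper's proof. The gap is in how you identify the optimal $\mathbf{P}$. The assertion that ``a right unitary factor [is] immaterial once the scale is fixed'' is false: writing $\mathbf{A}=\mathbf{U}_a[\mathbf{\Gamma}_a\ \ \mathbf{0}]\mathbf{V}_a^\top$, one has $\mathbf{P}=\mathbf{V}_a\,\mathrm{diag}(\mathbf{I}_M,\mathbf{0})\,\mathbf{V}_a^\top$, so the \emph{entire} dependence of the cost on $\mathbf{A}$ lives in $\mathbf{V}_a$. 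Invoking \proref{prop:sepc_1} cannot close this, because that result works with the relaxed SDR variable $\mathbf{Q}$ (whose optimum $\alpha\mathbf{I}_N$ carries no row-space information) and explicitly leaves the right unitary $\mathbf{V}_a$ arbitrary, whereas \eqref{eq:spec_case_3} is precisely the claim $\mathbf{V}_a=\mathbf{I}_N$. You are conflating ``tight frame'' (a condition on $\mathbf{AA}^\top$) with ``coordinate-aligned projector'' (a condition on $\mathbf{A}^\top\mathbf{A}$); indeed, your own concavity argument on $\mu\mapsto\mu/(\sigma_v^2+\sigma_x^2\mu)$ together with the fixed total $\sum_\mathcal{S}\Tr\{\mathbf{E}_\mathcal{S}^\top\mathbf{P}\mathbf{E}_\mathcal{S}\}$ would, via Jensen, push toward the projector with constant diagonal $M/N$, not toward $\mathrm{diag}(\mathbf{I}_M,\mathbf{0})$.

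The paper's argument at this juncture is different and is the ingredient you are missing. After the same SVD it rewrites each summand as $\Tr\{(\mathbf{E}_\mathcal{S}^\top\mathbf{V}_a\mathbf{D}\mathbf{V}_a^\top\mathbf{E}_\mathcal{S})^{-1}\}$ with $\mathbf{D}=\mathrm{diag}\big((\tfrac{1}{\sigma_x^2}+\tfrac{1}{\sigma_v^2})\mathbf{I}_M,\ \tfrac{1}{\sigma_x^2}\mathbf{I}_{N-M}\big)$ and applies the elementary bound $\Tr\{\mathbf{M}^{-1}\}\geq\sum_i 1/\mathbf{M}_{ii}$ for positive-definite $\mathbf{M}$, with equality iff $\mathbf{M}$ is diagonal; requiring $\mathbf{E}_\mathcal{S}^\top\mathbf{V}_a\mathbf{D}\mathbf{V}_a^\top\mathbf{E}_\mathcal{S}$ to be diagonal for every $\mathcal{S}$ is what the paper uses to force $\mathbf{V}_a=\mathbf{I}_N$. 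This diagonal-versus-off-diagonal inequality, not tight-frame optimality or Schur--Horn majorisation, is the key lemma. (A minor secondary point: when you compute $c$ you silently drop the $\sigma_v^2\mathbf{I}_N$ term from \eqref{eq:power}; you should carry the constraint through explicitly rather than assert the final value.)
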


\subsubsection{Special Case IV ($\mathbf{v= 0} $, $\frac{g^2}{\sigma_w^2} \rightarrow 0$)} \label{sec:special case4}
Now, we consider an asymptotic case, where the communication channel is in a noisy regime such that the ratio between the power of channel gain over the power of additive channel noise tends to zero, i.e., $g^2/\sigma_w^2 \rightarrow 0$. 
\begin{proposition} \label{prop:sepc_4}
	Let $\mathbf{v= 0} $ and $\frac{g^2}{\sigma_w^2} \rightarrow 0$, and define $\mathbf{T} \triangleq \sum_\mathcal{S}  \mathbf{D}_\mathcal{S} \mathbf{R}^2 \mathbf{D}_\mathcal{S}^\top$ and $\mathbf{Z} \triangleq \mathbf{T}^{-1/2} \mathbf{HR}_x \mathbf{H}^\top \mathbf{T}^{-1/2}$ which has the EVD $\mathbf{Z} = \mathbf{U}_z \mathbf{\Gamma}_z \mathbf{U}_z^\top$. Then, the approximate solution to the two-stage optimization procedure is asymptotically given by
	\begin{equation} \label{eq:sol asymp2}
		\mathbf{A}^\star =  \mathbf{U}_a \left[\mathrm{diag}  \left(\sqrt{\gamma_q},0,\ldots,0 \right)\; \; \mathbf{0}_{M \times (L-M)} 		\right] \mathbf{U}_q^\top,
	\end{equation}
	where $\mathbf{U}_a \in \mathbb{R}^{M \times M}$ is an arbitrary unitary matrix,  and $\gamma_q$ is the only non-zero eignevlaue of 
	\begin{equation} \label{eq:sol asymp}
		\mathbf{Q}^\star =  \mathbf{T}^{-1/2}\mathbf{U}_z \mathrm{diag} \left(\frac{P}{\gamma_{z_1}},0,\ldots,0 \right)\mathbf{U}_z^\top \mathbf{T}^{-1/2}.
	\end{equation}
	Further, $\mathbf{U}_q$ is the eigenvector associated with the EVD of $\mathbf{Q}^\star$, and $\gamma_{z_1}$ is the smallest eigenvalue of $\mathbf{Z}$.
	\end{proposition}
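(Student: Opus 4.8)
The plan is to push problem \eqref{eq:opt 1} through the machinery already in place: specialize to $\mathbf{v}=\mathbf{0}$, linearize the MSE lower bound in the small parameter $g^2/\sigma_w^2$, solve the resulting \emph{linear} SDP in closed form, and then apply the two-stage low-rank step of \secref{sec:design}. First I would record what $\mathbf{v}=\mathbf{0}$ buys us: $\sigma_v^2=0$, so by \eqref{eq:cov nosie} $\mathbf{R}_n=\sigma_w^2\mathbf{I}_M$, and the power constraint \eqref{eq:power} collapses to $\Tr\{\mathbf{H}\mathbf{R}_x\mathbf{H}^\top\mathbf{Q}\}\le P$ with $\mathbf{Q}=\mathbf{A}^\top\mathbf{A}$. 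Writing $\epsilon\triangleq g^2/\sigma_w^2$ and $\mathbf{D}_\mathcal{S}\triangleq\mathbf{H}\mathbf{E}_\mathcal{S}$ (so that $(\mathbf{AH})_\mathcal{S}=\mathbf{A}\mathbf{D}_\mathcal{S}$), the objective \eqref{eq:oracle_MSE_2} becomes $\mathrm{MSE}^{(lb)}={N \choose K}^{-1}\sum_\mathcal{S}\Tr\{(\mathbf{R}^{-1}+\epsilon\,\mathbf{D}_\mathcal{S}^\top\mathbf{Q}\mathbf{D}_\mathcal{S})^{-1}\}$. I would then observe that the relaxed feasible set $\{\mathbf{Q}\succeq\mathbf{0}:\Tr\{\mathbf{H}\mathbf{R}_x\mathbf{H}^\top\mathbf{Q}\}\le P\}$ is compact, since $\mathbf{H}\mathbf{R}_x\mathbf{H}^\top\succeq\mathbf{0}$ is in fact positive definite, and that it does not depend on $\epsilon$; this is what will let me pass to the limit.

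Next I would linearize. Factoring $(\mathbf{R}^{-1}+\epsilon\,\mathbf{D}_\mathcal{S}^\top\mathbf{Q}\mathbf{D}_\mathcal{S})^{-1}=(\mathbf{I}_K+\epsilon\,\mathbf{R}\mathbf{D}_\mathcal{S}^\top\mathbf{Q}\mathbf{D}_\mathcal{S})^{-1}\mathbf{R}$ and expanding the first factor as a Neumann series gives $\mathbf{R}-\epsilon\,\mathbf{R}\mathbf{D}_\mathcal{S}^\top\mathbf{Q}\mathbf{D}_\mathcal{S}\mathbf{R}+O(\epsilon^2)$, the remainder being uniform over the compact feasible set. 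Taking traces, using $\Tr\{\mathbf{R}\mathbf{D}_\mathcal{S}^\top\mathbf{Q}\mathbf{D}_\mathcal{S}\mathbf{R}\}=\Tr\{\mathbf{D}_\mathcal{S}\mathbf{R}^2\mathbf{D}_\mathcal{S}^\top\mathbf{Q}\}$ and summing over $\mathcal{S}$, I get $\mathrm{MSE}^{(lb)}=\Tr\{\mathbf{R}\}-\epsilon\,{N \choose K}^{-1}\Tr\{\mathbf{T}\mathbf{Q}\}+O(\epsilon^2)$ with $\mathbf{T}=\sum_\mathcal{S}\mathbf{D}_\mathcal{S}\mathbf{R}^2\mathbf{D}_\mathcal{S}^\top$ as in the statement. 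Since the leading term is a $\mathbf{Q}$-independent constant, as $\epsilon\downarrow0$ minimizing $\mathrm{MSE}^{(lb)}$ over the relaxed feasible set becomes equivalent to maximizing the linear functional $\Tr\{\mathbf{T}\mathbf{Q}\}$ subject to $\Tr\{\mathbf{H}\mathbf{R}_x\mathbf{H}^\top\mathbf{Q}\}\le P$ and $\mathbf{Q}\succeq\mathbf{0}$.

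To solve this linear SDP I would whiten by $\mathbf{T}$: with $\mathbf{T}\succeq\mathbf{0}$ positive definite, set $\widetilde{\mathbf{Q}}=\mathbf{T}^{1/2}\mathbf{Q}\mathbf{T}^{1/2}$, which turns the objective into $\Tr\{\widetilde{\mathbf{Q}}\}$ and the constraint into $\Tr\{\mathbf{Z}\widetilde{\mathbf{Q}}\}\le P$ with $\mathbf{Z}=\mathbf{T}^{-1/2}\mathbf{H}\mathbf{R}_x\mathbf{H}^\top\mathbf{T}^{-1/2}$. Diagonalizing $\mathbf{Z}=\mathbf{U}_z\mathbf{\Gamma}_z\mathbf{U}_z^\top$ and writing $\bar{\mathbf{Q}}=\mathbf{U}_z^\top\widetilde{\mathbf{Q}}\mathbf{U}_z\succeq\mathbf{0}$, one has $\Tr\{\widetilde{\mathbf{Q}}\}=\sum_i[\bar{\mathbf{Q}}]_{ii}\le\gamma_{z_1}^{-1}\sum_i\gamma_{z_i}[\bar{\mathbf{Q}}]_{ii}=\gamma_{z_1}^{-1}\Tr\{\mathbf{Z}\widetilde{\mathbf{Q}}\}\le P/\gamma_{z_1}$, where $\gamma_{z_1}$ is the smallest eigenvalue of $\mathbf{Z}$; equality forces $\bar{\mathbf{Q}}$ onto the $\gamma_{z_1}$-eigenspace with the power bound tight, e.g. $\bar{\mathbf{Q}}=\mathrm{diag}(P/\gamma_{z_1},0,\ldots,0)$. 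Unwinding the two substitutions gives $\mathbf{Q}^\star$ exactly as in \eqref{eq:sol asymp}, a rank-one matrix with a single non-zero eigenvalue $\gamma_q$. Finally, since $\mathrm{rank}(\mathbf{Q}^\star)=1\le M$, the low-rank problem \eqref{eq:opt_rec_A_appx} is attained exactly: no positive eigenvalue of $\mathbf{Q}^\star$ is truncated, so $\mathbf{A}^{\star\top}\mathbf{A}^\star=\mathbf{Q}^\star$ and the power constraint already holds with equality, hence no rescaling is needed, and \eqref{eq:recstr A} with $\gamma_{q_2}=\cdots=\gamma_{q_M}=0$ reduces precisely to \eqref{eq:sol asymp2}.

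The one genuinely delicate point is the passage to the limit: I would need to argue that, because the $O(\epsilon^2)$ remainder is uniform over the $\epsilon$-independent compact feasible set while the first-order correction is of order $\epsilon$, any minimizer of the exact objective lies within $o(1)$ of a maximizer of $\Tr\{\mathbf{T}\mathbf{Q}\}$, so the limiting optimizer is indeed the claimed $\mathbf{Q}^\star$. Everything else — the two variable changes, the Neumann expansion, and the extreme-point argument for the linear SDP — is routine linear algebra. (If $\gamma_{z_1}$ happens to be a repeated eigenvalue, the optimal $\widetilde{\mathbf{Q}}$ is not unique, but any $\bar{\mathbf{Q}}$ supported on its eigenspace with the power constraint tight gives the same optimal value and a valid $\mathbf{A}^\star$ of the stated form.)
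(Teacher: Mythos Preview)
Your proposal is correct and follows essentially the same route as the paper's proof: linearize the objective via a first-order (Neumann/Taylor) expansion in $\epsilon=g^2/\sigma_w^2$, reduce to maximizing $\Tr\{\mathbf{T}\mathbf{Q}\}$ under the power constraint, whiten by $\mathbf{T}^{1/2}$ to obtain the $\Tr\{\widetilde{\mathbf{Q}}\}$ versus $\Tr\{\mathbf{Z}\widetilde{\mathbf{Q}}\}\le P$ problem, diagonalize $\mathbf{Z}$, and read off the rank-one optimizer. Your diagonal-entry bound $\sum_i[\bar{\mathbf{Q}}]_{ii}\le\gamma_{z_1}^{-1}\sum_i\gamma_{z_i}[\bar{\mathbf{Q}}]_{ii}$ replaces the paper's appeal to \lemref{lem:constr bound}, and your remarks on uniformity of the $O(\epsilon^2)$ remainder and on the low-rank step being exact (hence no rescaling) are welcome refinements, but the overall argument is the same.
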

\begin{remark}
From \eqref{eq:sol asymp2}, it can be observed if channel condition degrades, as $g^2 / \sigma_w^2 \rightarrow 0$, the approximate sensing matrix has only one active singular-value.
\end{remark}

\vspace{-0.25cm}
\section{Complexity Considerations} \label{sec:complexity}
Here, we discuss the computational complexity of solving the proposed optimization scheme for sensing matrix design, and offer an approach in order to solve the optimization problem with significantly less computational complexity.

The high computational complexity in the two-stage optimization procedure proposed in \secref{sec:design} arises from the first step, i.e., solving the SDR problem (\eqref{eq:opt 1_final}  without the rank constraint). More precisely, the optimization problem consists of one matrix variable $\mathbf{Q}$ of size $L \times L$,  ${N \choose K}$  matrix variables $\mathbf{X}_\mathcal{S}$ of size $K \times K$, and one matrix variable $\mathbf{Y}$ of size $L \times L$. Hence, it can be iteratively solved using interior point methods with computational complexity growing at most like $\mathcal{O}(2 L^6 + {N \choose K}^3 K^6)$ arithmetic operations in each iteration \cite{06:Zhi-Quan}. Therefore, as $N$ increases, the computational complexity grows exponentially due to the term ${N \choose K}$. 

The computational complexity of solving the SDR problem can be significantly reduced under certain assumptions (see, e.g., the special cases I-IV), in which closed-form solutions can be derived. Here, we offer an alternative in order to solve the SDR problem of \eqref{eq:opt 1_final} with a reduced  computational complexity. Note that the objective function $\mathrm{MSE}^{(lb)}$ in \eqref{eq:oracle_MSE_2} can be rewritten as 
\begin{equation} \label{eq:less_complex}
	\mathrm{MSE}^{(lb)} = \mathbb{E}_{\boldsymbol{\mathcal{S}}} \left[  \Tr \left\{ \left(\mathbf{R}^{-1} \!+\! g^2 \mathbf{E}_{\boldsymbol{\mathcal{S}}}^\top \mathbf{H}^\top \mathbf{A}^\top \mathbf{R}_n^{\! -1} \mathbf{A} \mathbf{H} \mathbf{E}_{\boldsymbol{\mathcal{S}}} \right)^{\!-1} \right\} \right],
\end{equation}
where $\boldsymbol{\mathcal{S}}$ is a random variable which picks a support set $\mathcal{S}$ uniformly at random from the set of all possibilities $\Omega$, and $\mathbb{E}_{\boldsymbol{\mathcal{S}}}$ means that the expectation is taken only over the randomness of $\boldsymbol{\mathcal{S}}$. Notice that the expectation in \eqref{eq:less_complex} can be (approximately) calculated using sample mean as
\begin{equation} \label{eq:less_complex_2}
	\mathrm{MSE}^{(lb)} \! \approx \! \!\frac{1}{|\Omega'|} \! \sum_{\mathcal{S}' \in \Omega'} \!   \Tr \! \left\{ \! \left(\mathbf{R}^{-1} \! \!+\! g^2 \mathbf{E}_{\mathcal{S}'}^\top \mathbf{H}^\top \mathbf{A}^\top \mathbf{R}_n^{\! -1} \mathbf{A} \mathbf{H} \mathbf{E}_{\mathcal{S}'} \! \right)^{\!-1} \! \right\} 
\end{equation}
where $\mathcal{S}'$ is uniformly chosen from a set $\Omega'$, which is a subset of $\Omega$. Note that the cardinality $|\Omega'|$ can be chosen to be far less than ${N \choose K}$ and still obtain a good approximation of 
(\ref{eq:less_complex}).
As a result, the computational complexity of solving the resulting SDR problem reduces to $\mathcal{O}(2 L^6 + |\Omega'|^3 K^6)$ arithmetic operations, where $ |\Omega'| \ll {N \choose K}$. 


\vspace{-0.25cm}
\section{Numerical Experiments} \label{sec:sim}
Now, we provide numerical experiments for evaluating the performance of the proposed sensing matrix design scheme in \secref{sec:design}, referred to as \textit{lower-bound minimizing sensing matrix}. We compare it with the following design methods:
\begin{itemize}[leftmargin=*]
	 \item \textit{Upper-bound minimizing sensing matrix:} Using this method, which has been studied in \cite{08:Jin,07:Schizas} in non-CS framework, the MSE of the MMSE estimator of the sparse source vector is upper-bounded by that of the linear MMSE (LMMSE) estimator. The MSE incurred by using the LMMSE estimator can be written as \cite{93:Kay}
\begin{equation} \label{eq:ub_MSE}
\begin{aligned}
	\mathrm{MSE}^{(ub)} &\triangleq  \Tr \left\{ \left(\mathbf{R}_x^{-1} + g^2 \mathbf{H}^\top \mathbf{A}^\top \mathbf{R}_n^{-1} \mathbf{A} \mathbf{H} \right)^{-1} \right\},&
\end{aligned}
\end{equation}	
which is minimized subject to a power constraint. 
	\item \textit{Gaussian sensing matrix:} Using this standard method, each element of the Gaussian sensing matrix is generated randomly according to the standard Gaussian distribution. 
	\item \textit{Tight frame}: Using this method, the sensing matrix is chosen as $\mathbf{A} =  \mathbf{U}_a \left[\mathbf{I}_M \; \; \mathbf{0}_{M \times (L-M)}\right]  \mathbf{V}_a^\top$, where $\mathbf{U}_a \in \mathbb{R}^{M \times M}$ and $\mathbf{V}_a \in \mathbb{R}^{L \times L}$ are arbitrary unitary matrices. 
\end{itemize}

Note that we scale the resulting sensing matrix, described above, by $\sqrt{P / \Tr\{(\mathbf{H} \mathbf{R}_x \mathbf{H}^\top+ \sigma_v^2 \mathbf{I}_L)\mathbf{A}^{\top} \mathbf{A}\}}$ in order to satisfy the power constraint. We also compare the actual MSE, incurred by using the above methods, with the value of the lower-bound \eqref{eq:oracle_MSE_2} when the lower-bound sensing matrix is applied. This will be referred to as \textit{lower-bound} in our experiments. It should be also mentioned that for solving the convex SDR problems, we use the \texttt{CVX} solver \cite{cvx} . 


We evaluate the performance using the normalized MSE (NMSE) criterion, defined as 
\begin{equation*}
	\mathrm{NMSE} \triangleq \frac{\mathbb{E}[\|\mathbf{x} - \widehat{\mathbf{x}}\|_2^2]}{K}.
\end{equation*}

We randomly generate a set of exactly $K$-sparse vector $\mathbf{x}$, where the support set $\mathcal{S}$ with $|\mathcal{S}| = K$ is chosen uniformly at random over the set $\{1,2,\ldots,N\}$. The non-zero components of $\mathbf{x}$ are drawn from Gaussian distribution $\mathcal{N}(\mathbf{0}_K,\mathbf{R})$. The covariance matrix $\mathbf{R}$ is generated according to the exponential model \cite{01:Loyka}, where each entry at row $i$ and column $j$ is chosen as $\rho^{| i - j |}$ in which $0 \leq \rho < 1$ is known as the correlation coefficient. We Compute sample covariance matrix for the sparse source vector, i.e., $\mathbf{R}_x = \mathbb{E}[\mathbf{x} \mathbf{x}^\top]$ using $10^5$ random generation of the source vector $\mathbf{x}$. We also let $\mathbf{v= 0}$ and $\mathbf{H} = \mathbf{I}_N$. In order to implement the decoder, we use two different sparse reconstruction methods: the greedy orthogonal matching pursuit (OMP) algorithm \cite{07:Tropp}, and the Bayesian-based random--OMP algorithm \cite{09:Elad}, which is a low-complexity approximation of the exact (exhaustive) MMSE estimator. The actual performance of the proposed design methods is assessed using Monte-Carlo simulations by generating $5000$ realizations of the input sparse vector $\mathbf{x}$. In our first two experiments, we use, at the decoder, random-OMP algorithm for reconstruction of sparse source vector. 

In our first experiment, we use the simulation parameters $N = 36, K= 3, P=10 \text{ dB}, g = 0.5, \sigma_w = 0.1, \rho = 0.25$. We plot the NMSE of the design methods as a function of $M$ in \figref{fig:MSE_M_randOMP_new}, and observe that at all measurement regions, the proposed lower-bound minimizing sensing matrix outperforms the other competing methods by taking into account the sparsity pattern of the  source. As expected, as the number of measurements increases, the performance of the methods improves, however, it finally saturates and increasing $M$ further does not help to improve NMSE. This is because at higher number of measurements, the NMSE is influenced more by the additive noise in the system which is fixed. As $M$ increases, the performance of the tight frame approaches that of the lower-bound  minimizing sensing matrix, which shows that the latter behaves like an orthogonal transform. 

\begin{figure}
  \begin{center}
  \includegraphics[width=0.75\columnwidth,height=5.2cm]{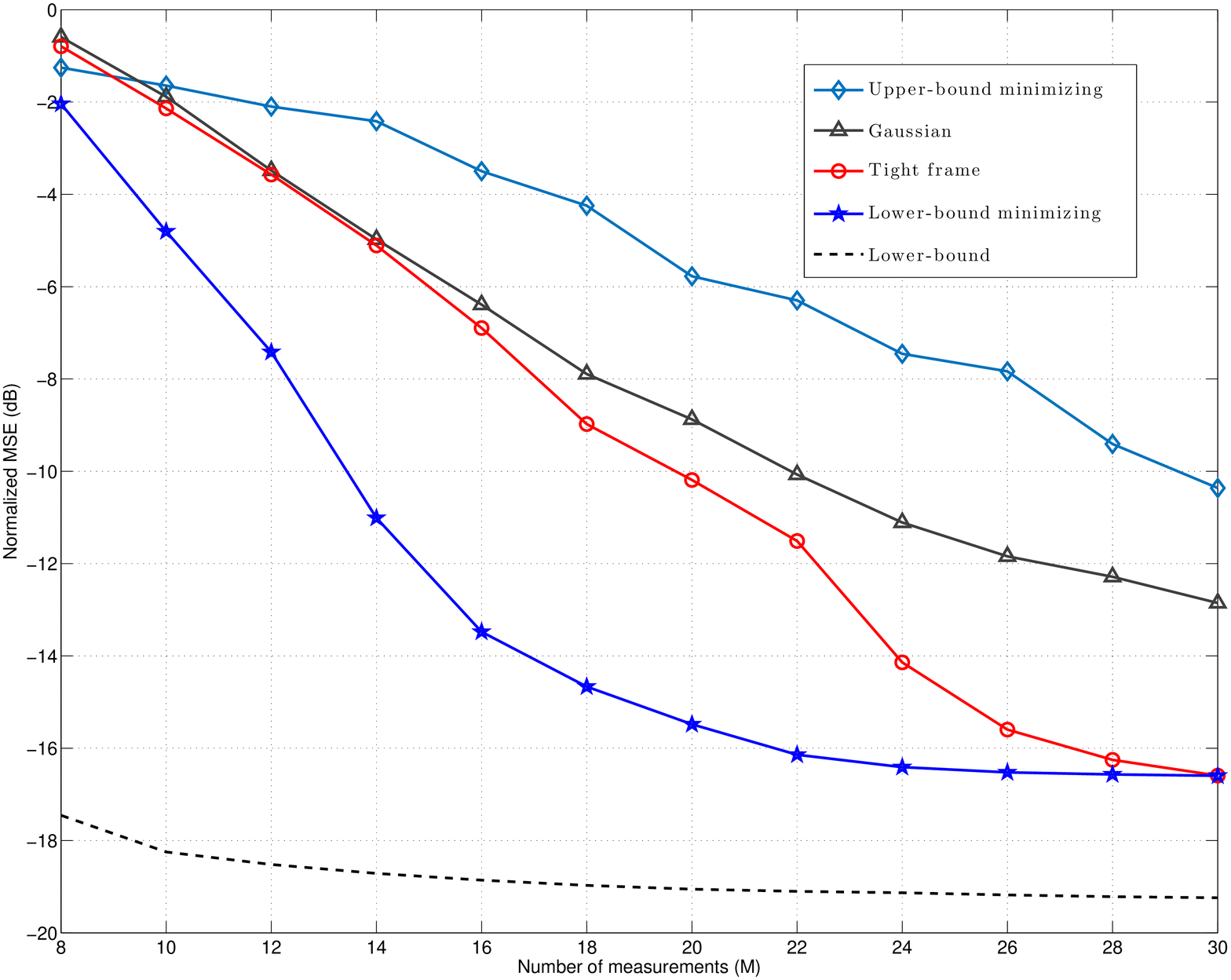}\\
  \caption{NMSE (in dB) as a function of number of measurements $M$. }
  \label{fig:MSE_M_randOMP_new}
  \end{center}
   \vspace{-0.35cm}
\end{figure}

Using the same simulation parameters, by fixing $M = 18$, we vary the transmission power $P$ (in dB), and evaluate the performance of the methods in terms of NMSE. The results are reported in \figref{fig:MSE_P_randOMP_new}. At the low power regime, the performance of the competing methods are almost the same, however, as $P$ increases, the proposed lower-bound minimizing sensing matrix outperforms the other schemes. For example, at $P = 10$ dB, the proposed scheme gives a better performance by more than 6 dB as compared to the other methods.  

\begin{figure}
  \begin{center}
 \includegraphics[width=0.75\columnwidth,height=5.2cm]{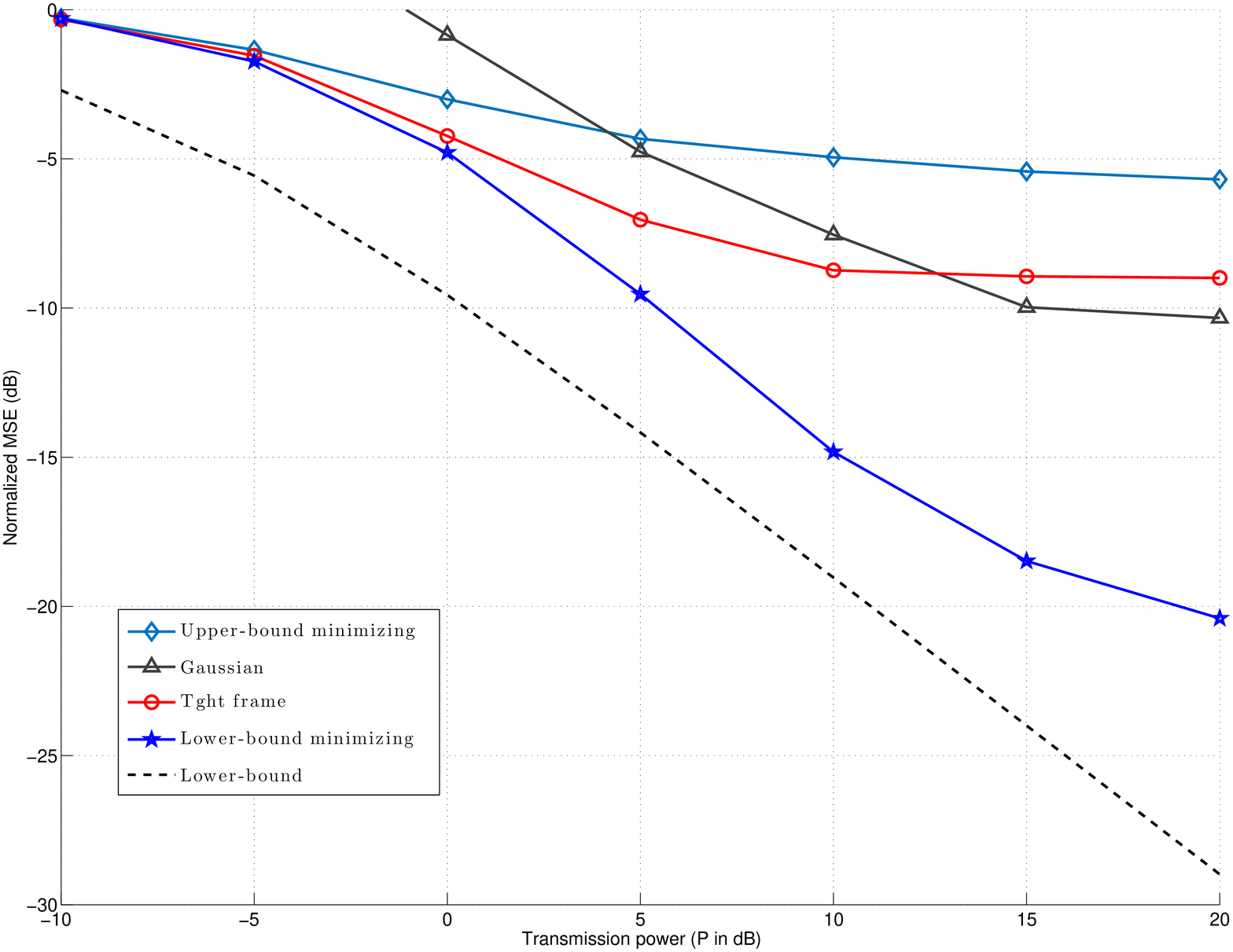}\\
  \caption{NMSE (in dB) as a function of transmission power $P$ (in dB).}
  \label{fig:MSE_P_randOMP_new}
  \end{center}
     \vspace{-0.4cm}
\end{figure}

In the previous experiments, we have used the random-OMP algorithm (as the approximation of the exact MMSE estimator) for reconstructing the sparse source. While this algorithm is nearly optimal (in MSE sense), the reconstructed vector might not be necessarily a sparse vector \cite{09:Elad}. In some applications, together with reconstruction accuracy, one might desire a sparse representation at the receiving-end. Therefore, we use the greedy OMP algorithm \cite{07:Tropp} which preserves the sparse structure through reconstructing the source.

We compare the performance of the methods (in terms of NMSE) as a function of channel signal to noise ratio (CSNR), defined as $\mathrm{CSNR} \triangleq g^2 / \sigma_w^2$, in logarithmic scale. The results are reported in \figref{fig:MSE_G_OMP_new}. Simulation parameters are chosen as $N = 36, K = 3, P = 10 \text{ dB}, M = 18, \rho = 0.5$. We fix $\sigma_w = 0.1$, and vary the CSNR from $1$ to $10^3$ where the channel gains $g$ are chosen accordingly. It is observed that at $\mathrm{CSNR} = 10^2$, the lower-bound minimizing sensing matrix outperforms the Gaussian sensing matrix by more than 8 dB, and the upper-bound minimizing sensing matrix by more than 10 dB. 

\begin{figure}
  \begin{center}
 \includegraphics[width=0.75\columnwidth,height=5.2cm]{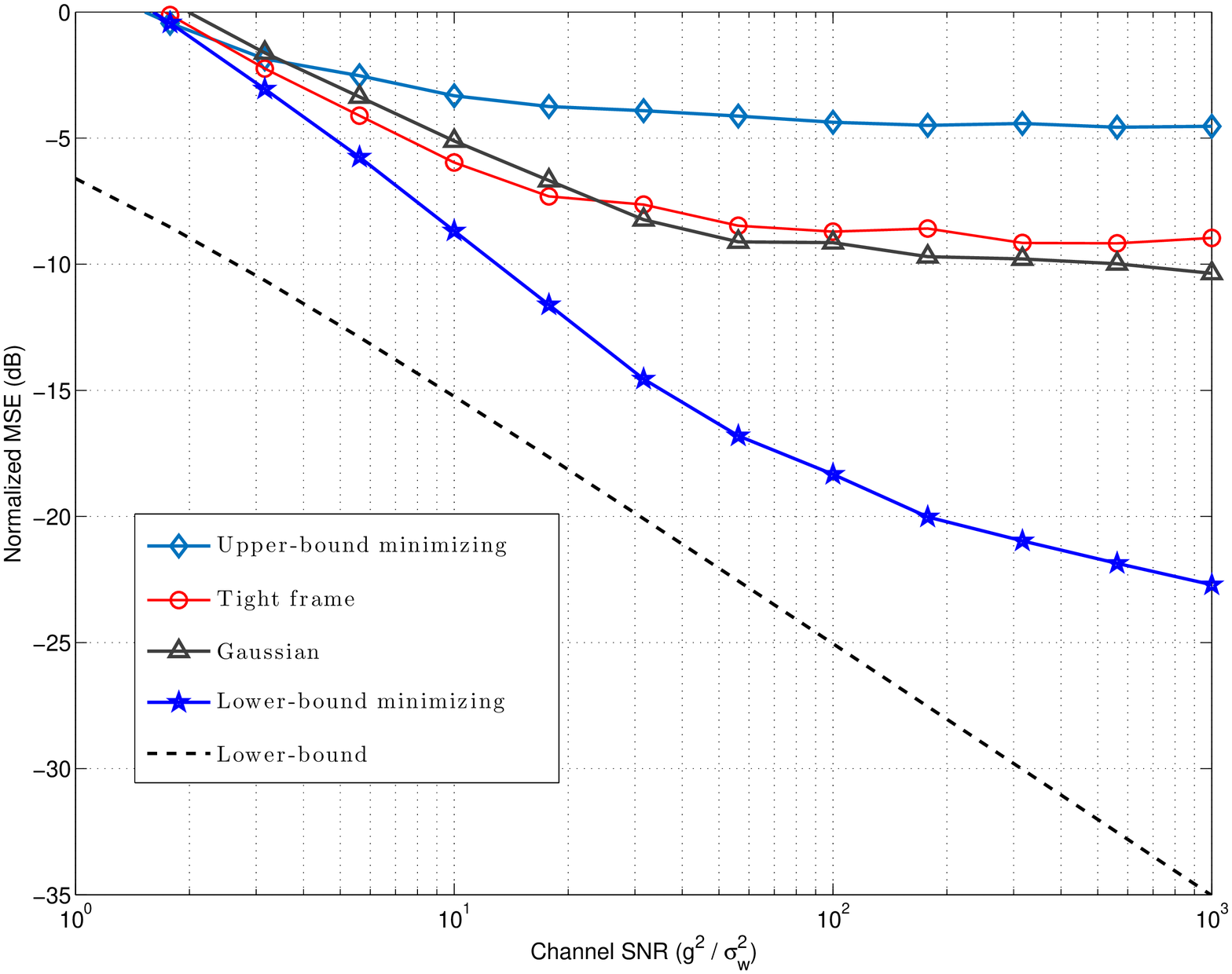}\\
  \caption{NMSE (in dB) as a function of CSNR (in dB).}
  \label{fig:MSE_G_OMP_new}
  \end{center}
     \vspace{-0.4cm}
\end{figure}

In our final experiment, we implement a higher-dimensional system, and apply the proposed low-complexity approach introduced in \secref{sec:complexity}. For this purpose, we choose the following simulation parameters: $N = 100, K = 5, \sigma_w = 0.1, g = 0.5, P =10 \text{ dB}, \rho = 0.75$, and plot the NMSE by varying $M$ in \figref{fig:MSE_M_100_new}. Further, the cardinality of the set $\Omega'$ in \eqref{eq:less_complex_2} is set to 2500, while the cardinality of the set of all sparsity patterns is $| \Omega |  = {N \choose K} \approx 7.5 \times 10^7$. It can be observed while the computational complexity of the lower-bound minimizing scheme has been considerably reduced, it still outperforms the other methods.

To observe the efficiency of the low-rank approximation in the second stage of our proposed method, we also show the performance of another design method, in the same figure, labelled by `randomization', where we use the randomization technique from \cite{10:Zhi} instead of the second stage in our method, given by \eqref{eq:recstr A}. More precisely, using this method, we assume that the resulting sensing matrix is given by $\mathbf{A} = \mathbf{V} \mathbf{\Gamma}^{1/2} \mathbf{U}_q^\top$, where $\mathbf{V} \in \mathbb{R}^{M \times L}$ is a random matrix whose elements $[\mathbf{V}]_{ij}$ are drawn from $\mathcal{N}(0, 1/ \sqrt{M})$ such that $\mathbb{E}[\mathbf{A}^\top \mathbf{A}] = \mathbf{Q}$. Note that we rescale each realization of $\mathbf{A}$ to meet the power constraint, and choose the one (out of 1000 realizations) which gives the lowest $\mathrm{MSE}^{(lb)}$. 

\begin{figure}
  \begin{center}
 \includegraphics[width=0.75\columnwidth,height=5.2cm]{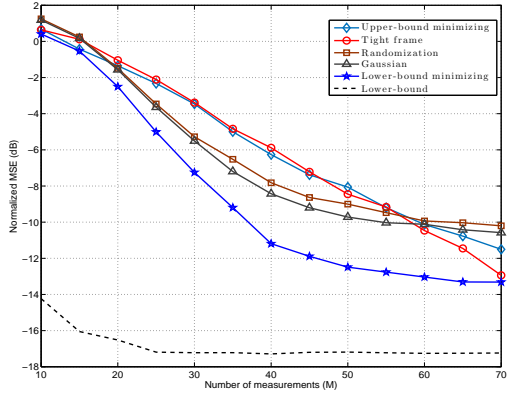}\\
  \caption{NMSE (in dB) as a function of number of measurements $M$.}
  \label{fig:MSE_M_100_new}
  \end{center}
     \vspace{-0.4cm}
\end{figure}

\vspace{-0.35cm}
\section{Conclusions} \label{sec:conclusions}
\vspace{-0.03cm}
We have proposed an optimization procedure for designing the sensing matrix, under a power constraint, in the CS framework. The design aims to minimize a lower-bound on MSE of sparse source reconstruction. Under certain conditions, we have been able to address the optimization procedure by deriving closed-form expressions for the sensing matrix. Numerical results show the advantage of our proposed design compared to other relevant schemes. This advantage has been achieved at the price of higher computational complexity. Therefore, we proposed an alternative approximation to the MSE lower bound objective function that results in significant complexity reduction.
\vspace{-0.3cm}

\begin{appendices}
\section{Some Useful Lemmas} \label{sec:lemmas}

The first lemma is straightforward to prove.
\begin{lemma} \label{lem:prop}
Given the matrix $\mathbf{E}_\mathcal{S} \in \mathbb{R}^{N \times K}$, which is constructed as an identity matrix whose columns excluding those indexed by the support set $\mathcal{S}$ are deleted, it has the following properties:
\begin{itemize}
	\item $\mathbf{E}_\mathcal{S}^\top \mathbf{E}_\mathcal{S} = \mathbf{I}_K$,
	\item $\sum_{\mathcal{S}} \mathbf{E}_\mathcal{S} \mathbf{E}_\mathcal{S}^\top = \frac{{N \choose K}}{K} \mathbf{I}_N.$
\end{itemize}
\end{lemma}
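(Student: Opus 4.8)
The plan is to verify the two identities directly from the explicit coordinate structure of $\mathbf{E}_\mathcal{S}$. First I would pin down the construction in the form that is actually used in the paper (in \eqref{eq:oracle_MSE_2} one has $(\mathbf{AH})_\mathcal{S} = \mathbf{AHE}_\mathcal{S}$), namely: $\mathbf{E}_\mathcal{S}$ is the $N \times K$ submatrix of $\mathbf{I}_N$ whose columns are the standard basis vectors $\mathbf{e}_n$ of $\mathbb{R}^N$ with $n \in \mathcal{S}$ (equivalently, the identity with the columns \emph{not} indexed by $\mathcal{S}$ deleted). Once this is made explicit, both claims are elementary. For the first identity, the $(i,j)$ entry of $\mathbf{E}_\mathcal{S}^\top \mathbf{E}_\mathcal{S}$ is $\mathbf{e}_{n_i}^\top \mathbf{e}_{n_j}$, where $n_i, n_j$ are the $i$-th and $j$-th elements of $\mathcal{S}$; since the elements of $\mathcal{S}$ are distinct, this is $1$ for $i=j$ and $0$ otherwise, so $\mathbf{E}_\mathcal{S}^\top \mathbf{E}_\mathcal{S} = \mathbf{I}_K$ — i.e.\ $\mathbf{E}_\mathcal{S}$ has orthonormal columns.

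For the second identity, I would first note that $\mathbf{E}_\mathcal{S}\mathbf{E}_\mathcal{S}^\top = \sum_{n \in \mathcal{S}} \mathbf{e}_n \mathbf{e}_n^\top$ is the diagonal matrix whose $n$-th diagonal entry equals $1$ if $n \in \mathcal{S}$ and $0$ otherwise. Summing over all $\mathcal{S} \in \Omega$ (with $|\Omega| = {N \choose K}$), the $(n,n)$ diagonal entry of $\sum_\mathcal{S} \mathbf{E}_\mathcal{S}\mathbf{E}_\mathcal{S}^\top$ counts the number of size-$K$ subsets of $\{1,\dots,N\}$ containing the fixed index $n$, which is ${N - 1 \choose K - 1}$, independent of $n$; all off-diagonal entries vanish. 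Hence $\sum_\mathcal{S} \mathbf{E}_\mathcal{S}\mathbf{E}_\mathcal{S}^\top = {N - 1 \choose K - 1}\,\mathbf{I}_N = \tfrac{K}{N}{N \choose K}\,\mathbf{I}_N$. As a consistency check one can take traces and use the first identity: $\sum_\mathcal{S} \Tr(\mathbf{E}_\mathcal{S}\mathbf{E}_\mathcal{S}^\top) = \sum_\mathcal{S} \Tr(\mathbf{E}_\mathcal{S}^\top \mathbf{E}_\mathcal{S}) = \sum_\mathcal{S} \Tr(\mathbf{I}_K) = K{N \choose K}$, which agrees with $N \cdot {N-1 \choose K-1}$.

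There is no genuine obstacle: the only step with any content is the symmetry/counting argument for the diagonal entries of the sum, and it is routine. The one thing I would flag is that the scalar in the statement is best written as ${N-1 \choose K-1} = \tfrac{K}{N}{N \choose K}$; the trace computation above fixes its value unambiguously, and this is also the form in which the identity is used later (e.g.\ when averaging $\mathrm{MSE}^{(lb)}$ over supports and in the definition of $\mathbf{T}$ in \proref{prop:sepc_4}).
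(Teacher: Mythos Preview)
Your proof is correct and is exactly the direct coordinate verification the paper has in mind when it calls the lemma ``straightforward to prove'' and omits the argument entirely. You are also right to flag the constant in the second identity: the trace check shows unambiguously that $\sum_\mathcal{S} \mathbf{E}_\mathcal{S}\mathbf{E}_\mathcal{S}^\top = {N-1 \choose K-1}\,\mathbf{I}_N = \tfrac{K}{N}{N \choose K}\,\mathbf{I}_N$, not $\tfrac{1}{K}{N \choose K}\,\mathbf{I}_N$ as printed; this is a typo in the statement, and it propagates into the power-constraint simplification \eqref{eq:power_cons_spec1} and the scaling constants in \proref{prop:sepc_1}.
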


\begin{lemma} \label{lem:deriv_cov}
The covariance matrix of the sparse source, i.e., $\mathbf{R}_{x}$, can be parametrized as
\begin{equation} \label{deriv_cov}
\begin{aligned}
	\mathbf{R}_x = \frac{1}{{N \choose K}}\sum_\mathcal{S} \mathbf{E}_\mathcal{S} \mathbf{R} \mathbf{E}_\mathcal{S}^\top.
\end{aligned}
\end{equation}
\end{lemma}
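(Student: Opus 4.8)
The plan is to prove the identity by conditioning on the randomly drawn support set and then averaging over it. First I would apply the law of total expectation with respect to the randomness of $\mathcal{S}$: since $\mathbf{R}_x = \mathbb{E}[\mathbf{x}\mathbf{x}^\top]$ and the support is drawn from $\Omega$ with $p(\mathcal{S}) = 1/{N \choose K}$, this gives
\begin{equation*}
	\mathbf{R}_x = \sum_{\mathcal{S} \subset \Omega} p(\mathcal{S})\, \mathbb{E}[\mathbf{x}\mathbf{x}^\top \mid \mathcal{S}] = \frac{1}{{N \choose K}} \sum_{\mathcal{S}} \mathbb{E}[\mathbf{x}\mathbf{x}^\top \mid \mathcal{S}].
\end{equation*}
So it remains to evaluate the conditional second-moment matrix for a fixed support.

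Conditioned on a fixed $\mathcal{S}$, the vector $\mathbf{x}$ is zero outside $\mathcal{S}$ and carries its $K$ active entries $\mathbf{x}_\mathcal{S}$ on the coordinates in $\mathcal{S}$; in terms of the selection matrix $\mathbf{E}_\mathcal{S}$ of \lemref{lem:prop}, this is the exact identity $\mathbf{x} = \mathbf{E}_\mathcal{S}\mathbf{x}_\mathcal{S}$, because the columns of $\mathbf{E}_\mathcal{S}$ scatter the components of $\mathbf{x}_\mathcal{S}$ into the positions indexed by $\mathcal{S}$ and leave the other $N-K$ coordinates at zero. Since the conditional law of $\mathbf{x}_\mathcal{S}$ is $\mathcal{N}(\mathbf{0}_K,\mathbf{R})$ regardless of which support was drawn, we have $\mathbb{E}[\mathbf{x}_\mathcal{S}\mathbf{x}_\mathcal{S}^\top] = \mathbf{R}$, and therefore
\begin{equation*}
	\mathbb{E}[\mathbf{x}\mathbf{x}^\top \mid \mathcal{S}] = \mathbf{E}_\mathcal{S}\, \mathbb{E}[\mathbf{x}_\mathcal{S}\mathbf{x}_\mathcal{S}^\top]\, \mathbf{E}_\mathcal{S}^\top = \mathbf{E}_\mathcal{S}\mathbf{R}\mathbf{E}_\mathcal{S}^\top.
\end{equation*}
Substituting this back into the previous display yields $\mathbf{R}_x = \frac{1}{{N \choose K}} \sum_\mathcal{S} \mathbf{E}_\mathcal{S}\mathbf{R}\mathbf{E}_\mathcal{S}^\top$, which is the stated parametrization.

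I do not expect any genuine obstacle here — this is essentially a bookkeeping lemma. The only point requiring a line of care is the conditional decomposition $\mathbf{x} = \mathbf{E}_\mathcal{S}\mathbf{x}_\mathcal{S}$ (equivalently, the observation that off-support coordinates of $\mathbf{x}$ contribute nothing to $\mathbf{x}\mathbf{x}^\top$), which is immediate from the construction of $\mathbf{E}_\mathcal{S}$ recalled in \lemref{lem:prop}, together with the fact that the active-block covariance $\mathbf{R}$ is the same for every support, which is part of the model in \secref{sec:sys model}. This compact representation of $\mathbf{R}_x$ in terms of $\mathbf{R}$ and the selection matrices is precisely the bridge needed later to pass between whole-vector quantities (such as the LMMSE/upper-bound expression) and their support-restricted counterparts.
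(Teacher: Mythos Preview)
Your proposal is correct and follows essentially the same approach as the paper's own proof: condition on the support via the law of total expectation, use $p(\mathcal{S})=1/{N\choose K}$, write $\mathbf{x}=\mathbf{E}_\mathcal{S}\mathbf{x}_\mathcal{S}$ on the event $\{\mathcal{S}\}$, and substitute $\mathbb{E}[\mathbf{x}_\mathcal{S}\mathbf{x}_\mathcal{S}^\top]=\mathbf{R}$. The only difference is that you spell out the justification for each step a bit more explicitly than the paper does.
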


\begin{proof}
We have
\begin{equation*}
\begin{aligned}
\mathbf{R}_x &= \mathbb{E}[\mathbf{xx}^\top] \overset{(a)}{=} \sum_\mathcal{S} p(\mathcal{S}) \mathbb{E}[\mathbf{xx}^\top \big | \mathcal{S}] & \\
	&\overset{(b)}{=} \frac{1}{{N \choose K}}\sum_\mathcal{S} \mathbb{E}[\mathbf{E}_\mathcal{S} \mathbf{x}_\mathcal{S} \mathbf{x}_\mathcal{S}^\top \mathbf{E}_\mathcal{S}^\top] = \frac{1}{{N \choose K}}\sum_\mathcal{S} \mathbf{E}_\mathcal{S} \mathbf{R} \mathbf{E}_\mathcal{S}^\top& 
\end{aligned}
\end{equation*}
where $(a)$ follows by marginalization over all support sets. Also, $(b)$ holds due to the fact that $p(\mathcal{S}) = 1/ {N \choose K}$; further, given a support set $\mathcal{S}$, $\mathbf{x} = \mathbf{E}_\mathcal{S} \mathbf{x}_\mathcal{S}$.
\end{proof}

\begin{lemma}\cite[page 249]{11:Marshall} \label{lem:constr bound}
	Let $\mathbf{A}$ and $\mathbf{B}$ are two $N \times N$ symmetric matrices, whose eigen-values $\alpha_1,\ldots, \alpha_N$ and $\beta_1,\ldots,\beta_N$ are ordered increasingly and decreasingly, respectively. Then $\Tr\{{\mathbf{AB}}\} \geq \sum_{i=1}^N \alpha_i \beta_i$.
\end{lemma}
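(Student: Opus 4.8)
The plan is to reduce the inequality to the classical rearrangement inequality by way of the Birkhoff--von Neumann theorem. First I would diagonalize both matrices: write $\mathbf{A} = \mathbf{U}\,\mathrm{diag}(\alpha_1,\ldots,\alpha_N)\,\mathbf{U}^\top$ and $\mathbf{B} = \mathbf{V}\,\mathrm{diag}(\beta_1,\ldots,\beta_N)\,\mathbf{V}^\top$ with $\mathbf{U},\mathbf{V}$ orthogonal, and set $\mathbf{W} \triangleq \mathbf{U}^\top \mathbf{V}$, which is again orthogonal. Cyclicity of the trace then gives
\[
\Tr\{\mathbf{AB}\} = \Tr\{\mathrm{diag}(\alpha_i)\,\mathbf{W}\,\mathrm{diag}(\beta_j)\,\mathbf{W}^\top\} = \sum_{i=1}^N \sum_{j=1}^N \alpha_i \beta_j\,[\mathbf{W}]_{ij}^2 .
\]

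Next I would observe that the matrix $\mathbf{S}$ with entries $[\mathbf{S}]_{ij} \triangleq [\mathbf{W}]_{ij}^2$ is doubly stochastic, because the rows and the columns of an orthogonal matrix are unit vectors. Hence $\Tr\{\mathbf{AB}\}$ is the value of the linear functional $\mathbf{S} \mapsto \sum_{i,j} \alpha_i \beta_j [\mathbf{S}]_{ij}$ at a point of the Birkhoff polytope of doubly stochastic matrices, and is therefore at least the minimum of that functional over the whole polytope. By the Birkhoff--von Neumann theorem this compact convex set is the convex hull of the permutation matrices, so the minimum is attained at a permutation matrix; that is, $\Tr\{\mathbf{AB}\} \ge \min_{\pi} \sum_{i=1}^N \alpha_i \beta_{\pi(i)}$, the minimum being over all permutations $\pi$ of $\{1,\ldots,N\}$.

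The last step is the rearrangement inequality: since $\alpha_1 \le \cdots \le \alpha_N$ is nondecreasing while $\beta_1 \ge \cdots \ge \beta_N$ is nonincreasing, the two sequences are already oppositely sorted, so $\sum_i \alpha_i \beta_{\pi(i)}$ is minimized over $\pi$ by the identity permutation and equals $\sum_i \alpha_i \beta_i$. Combining this with the previous bound yields $\Tr\{\mathbf{AB}\} \ge \sum_{i=1}^N \alpha_i \beta_i$. Equivalently, one can quote von Neumann's trace inequality $\Tr\{\mathbf{X}\mathbf{Y}\} \le \sum_i \lambda_i^{\downarrow}(\mathbf{X})\lambda_i^{\downarrow}(\mathbf{Y})$ with $\mathbf{X} = \mathbf{A}$, $\mathbf{Y} = -\mathbf{B}$ and negate, but I would prefer the self-contained Birkhoff argument.

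I expect the only delicate points to be the two reductions rather than any computation: verifying that a linear functional over the doubly stochastic matrices attains its extremum at a vertex --- which is exactly where Birkhoff--von Neumann is needed --- and getting the orientation of the rearrangement inequality right, since here we need the \emph{minimum} over permutations, which is why the monotonicity hypotheses on the $\alpha_i$ and the $\beta_i$ must point in opposite directions. Eigenvalue multiplicities and repeated values require no special care, because everything after the spectral decomposition is purely combinatorial in the indices.
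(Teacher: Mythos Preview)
Your proof is correct. The paper does not actually supply a proof of this lemma; it simply cites it from \cite[page 249]{11:Marshall} as a known trace inequality. Your argument via the spectral decomposition, the observation that the entrywise square of an orthogonal matrix is doubly stochastic, and then Birkhoff--von~Neumann combined with the rearrangement inequality is the standard self-contained route to this result (it is essentially the proof given in Marshall--Olkin--Arnold). The alternative you mention at the end---applying von~Neumann's trace inequality to $\mathbf{A}$ and $-\mathbf{B}$---is also valid and arguably quicker, but your Birkhoff argument has the virtue of not invoking a result of comparable strength as a black box. Nothing further is needed; your handling of the orientation in the rearrangement step (oppositely sorted sequences minimize the bilinear sum) is the only place one could slip, and you have it right.
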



\section{Proof of \theoref{theo:sing_ter}} \label{sec:app A}

To be able to solve the optimization problem in \eqref{eq:opt 1}, let us first define
\begin{equation} \label{eq:MSE_S}
	\mathrm{MSE}^{(lb)}_\mathcal{S} \triangleq \Tr \left\{\left(\mathbf{R}^{-1} + g^2 \mathbf{E}_\mathcal{S}^\top \mathbf{H}^\top \mathbf{A}^\top \mathbf{R}_n^{-1} \mathbf{A} \mathbf{H} \mathbf{E}_\mathcal{S} \right)^{-1} \right\}.
\end{equation}
Using matrix inversion lemma for $\mathbf{R}_n^{-1}$, we obtain
\begin{equation} \label{eq:mtx inv}
\begin{aligned}
	\mathbf{R}_n^{-1}  
	&= \sigma_w^{-2} \mathbf{I}_M - \sigma_w^{-2} \mathbf{A} \left(\frac{\sigma_w^{2}}{g^2 \sigma_v^{2}} \mathbf{I}_N + \mathbf{A}^\top \mathbf{A} \right)^{-1} \mathbf{A}^\top.& 
\end{aligned}
\end{equation}
Plugging \eqref{eq:mtx inv} back into \eqref{eq:MSE_S}, it follows that
\begin{equation} \label{eq:opt 1_reform1}
\begin{aligned}
	&\mathrm{MSE}^{(lb)}_\mathcal{S} = \Tr \left\{\bigg(\mathbf{R}^{-1} + \frac{g^2}{\sigma_w^2} \mathbf{E}_\mathcal{S}^\top \mathbf{H}^\top \mathbf{A}^\top \mathbf{A} \mathbf{H} \mathbf{E}_\mathcal{S} \right.& \\
	&  \left. - \frac{g^2}{\sigma_w^2} \mathbf{E}_\mathcal{S}^\top \mathbf{H}^\top \mathbf{A}^\top  \mathbf{A}\left(\frac{\sigma_w^2}{g^2 \sigma_v^2} \mathbf{I}_N + \mathbf{A}^\top \mathbf{A}\right)^{-1} \mathbf{A}^\top \mathbf{AHE}_\mathcal{S} \bigg)^{-1} \right\}.& 
\end{aligned}
\end{equation}

Next, defining $\mathbf{Q} \triangleq \mathbf{A}^\top \mathbf{A}$ and $\mathbf{D}_\mathcal{S} \triangleq \mathbf{HE}_\mathcal{S}$, the original optimization problem in \eqref{eq:opt 1} for finding optimized sensing matrix $\mathbf{A}$ can be equivalently translated into\footnote[1]{Note that since $p(\mathcal{S}) = 1/{N \choose K}$, it can be ignored in formulating the resulting optimization problems.}

\begin{equation} \label{eq:opt 1_reform2}
\begin{aligned}
	&\underset{\mathbf{Q}}{\text{minimize}} \hspace{0.25cm} \sum_\mathcal{S} \Tr \left\{\bigg(\mathbf{R}^{-1} + \frac{g^2}{\sigma_w^2} \mathbf{D}_\mathcal{S}^\top \mathbf{Q} \mathbf{D}_\mathcal{S}  \right.& \\
	 &\left. \hspace{2cm}- \frac{g^2}{\sigma_w^2} \mathbf{D}_\mathcal{S}^\top  \mathbf{Q} \left(\frac{\sigma_w^2}{g^2 \sigma_v^2} \mathbf{I}_N + \mathbf{Q} \right)^{-1} \mathbf{Q} \mathbf{D}_\mathcal{S} \bigg)^{-1} \right\}& \\
	& \text{subject to} \hspace{0.25cm} \Tr \{(\mathbf{HR}_x \mathbf{H}^\top + \sigma_v^2 \mathbf{I}_N) \mathbf{Q}\} \leq P & \\
	&\hspace{1.7cm} \mathbf{Q} \succeq \mathbf{0} , \hspace{0.15cm} \mathrm{rank}(\mathbf{Q}) = M,&
\end{aligned}
\end{equation}
where the rank constraint appears due to the fact that the sensing matrix has less rows than columns. Introducing the semidefinite slack variable matrix $\mathbf{X}_\mathcal{S} \in \mathbb{R}^{K \times K}$, we can alternatively solve
\begin{equation} \label{eq:opt 1_reform3}
\begin{aligned}
	&\underset{\mathbf{Q},\mathbf{X}_\mathcal{S}}{\text{minimize}} \hspace{0.25cm} \sum_\mathcal{S} \Tr \{\mathbf{X}_\mathcal{S}\}	& \\
	&\text{subject to} \hspace{0.25cm} \bigg(\mathbf{R}^{-1} + \frac{g^2}{\sigma_w^2} \mathbf{D}_\mathcal{S}^\top \mathbf{Q} \mathbf{D}_\mathcal{S}  & \\
	 &\hspace{1.7cm} \!-\! \frac{g^2}{\sigma_w^2} \mathbf{D}_\mathcal{S}^\top  \mathbf{Q}\big(\frac{\sigma_w^2}{g^2 \sigma_v^2} \mathbf{I}_N\! + \!\mathbf{Q}\big)^{-\! 1} \mathbf{Q} \mathbf{D}_\mathcal{S} \bigg)^{- \!1} \! \! \! \! \! \preceq  \! \mathbf{X}_\mathcal{S}, \mathcal{S} \! \subset  \! \Omega& \\
	 &\hspace{1.7cm} \Tr \{ (\mathbf{HR}_x \mathbf{H}^\top + \sigma_v^2 \mathbf{I}_N) \mathbf{Q}\} \leq P & \\
	&\hspace{1.7cm} \mathbf{Q} \succeq \mathbf{0} , \hspace{0.15cm} \mathrm{rank}(\mathbf{Q}) = M.&
\end{aligned}
\end{equation}

Next, by applying the Schur's complement \cite{04:Boyd_book}, the first constraint in \eqref{eq:opt 1_reform3} can be rewritten as the positive semi-definite constraint $\mathbf{C} \succeq \mathbf{0}$, where $\mathbf{C} \triangleq $
\begin{equation} \label{eq:opt 1_reform4}
\begin{aligned}
&\left[\! \! \!
\begin{array}{c c}
   \mathbf{R}^{\!-1} \!+ \!\frac{g^2}{\sigma_w^2} \mathbf{D}_\mathcal{S}^\top \mathbf{Q} \mathbf{D}_\mathcal{S}  \!-\! \frac{g^2}{\sigma_w^2} \mathbf{D}_\mathcal{S}^\top  \mathbf{Q}(\frac{\sigma_w^2}{g^2 \sigma_v^2} \mathbf{I}_N \!+\! \mathbf{Q})^{-1} \mathbf{Q} \mathbf{D}_\mathcal{S} & \mathbf{I}_K \\ 
  \mathbf{I}_K  &   \mathbf{X}_\mathcal{S}   \\
\end{array} \! \! \!
\right] & 
\end{aligned}
\end{equation}

Introducing another slack semidefinite variable matrix $\mathbf{Y} \in \mathbb{R}^{N \times N}$, such that $\mathbf{Y} \succeq \frac{g^2}{\sigma_w^2}  \mathbf{Q}(\frac{\sigma_w^2}{g^2 \sigma_v^2} \mathbf{I}_N \!+\! \mathbf{Q})^{-1} \mathbf{Q}$, and using the Schur's complement for the resulting matrix inequality, we can further decompose the constraint in \eqref{eq:opt 1_reform4} into two linear matrix inequalities as follows

\begin{equation} \label{eq:opt 1_reform5}
\left[
\begin{array}{c c}
   \mathbf{R}^{-1} + \frac{g^2}{\sigma_w^2} \mathbf{D}_\mathcal{S}^\top  \mathbf{Q} \mathbf{D}_\mathcal{S}  - \mathbf{D}_\mathcal{S}^\top\mathbf{Y} \mathbf{D}_\mathcal{S} & \mathbf{I}_K \\ 
  \mathbf{I}_K  &   \mathbf{X}_\mathcal{S}   \\
\end{array}
\right] \succeq \mathbf{0},
\end{equation}

\begin{equation} \label{eq:opt 1_reform6}
\left[
\begin{array}{c c}
   \mathbf{Y} & \frac{g}{\sigma_w}\mathbf{Q} \\ 
  \frac{g}{\sigma_w} \mathbf{Q}  &  \frac{\sigma_w^2}{g^2 \sigma_v^2} \mathbf{I}_N  + \mathbf{Q} \\
\end{array}
\right] \succeq \mathbf{0}.
\end{equation}

Thus, using the linear matrix inequality constraints \eqref{eq:opt 1_reform5} and \eqref{eq:opt 1_reform6} in \eqref{eq:opt 1_reform3}, we can solve the optimization problem which is expressed by \eqref{eq:opt 1_final}.

Note that the optimal $\mathbf{Q}^\star$ is a rank--$M$ matrix, and has to have $M$ non-zero eigen-values, otherwise the optimal value of the objective function cannot be obtained. Now, let the EVD of $\mathbf{Q}^\star$ be $\mathbf{U}_q \mathrm{diag}(\mathbf{\Gamma}_q, \mathbf{0}_{(L-M)\times (L-M)}) \mathbf{U}_q^\top$, where $\mathbf{\Gamma}_q$ is a diagonal matrix containing the non-zero eigen-values. Then, noting $\mathbf{Q}^\star = \mathbf{A}^{\star \top} \mathbf{A}^\star$, the optimal $\mathbf{A}^\star$ becomes $\mathbf{A}^\star = \mathbf{U}_a [\mathbf{\Gamma}_q^{1/2} \; \; \mathbf{0}_{M \times (L-M)}] \mathbf{U}_q^\top$. Hence, the optimal sensing matrix $\mathbf{A}^\star$ minimizes \eqref{eq:opt 1}.

\section{Proof of \proref{prop:sepc_1}}
Using the notation $\mathbf{Q} = \mathbf{A}^\top \mathbf{A}$, we rewrite  \eqref{eq:opt 1_reform1} as
\begin{equation} \label{eq:closed_spec1_1}
\begin{aligned}
	&\mathrm{MSE}^{(lb)}_\mathcal{S} = \Tr \left\{\bigg(\frac{1}{\sigma_x^2} \mathbf{I}_K + \frac{g^2}{\sigma_w^2} \mathbf{E}_\mathcal{S}^\top \mathbf{Q} \mathbf{E}_\mathcal{S}  \right.& \\
	& \hspace{1.4cm} - \left. \frac{g^2}{\sigma_w^2} \mathbf{E}_\mathcal{S}^\top \mathbf{Q} \left(\frac{\sigma_w^2}{g^2 \sigma_v^2} \mathbf{I}_N + \mathbf{Q}\right)^{-1} \mathbf{Q} \mathbf{E}_\mathcal{S} \bigg)^{-1} \right\}.& 
\end{aligned}
\end{equation}

Applying \lemref{lem:deriv_cov}, the power constraint can be written as 
\begin{equation} \label{eq:power_cons_spec1}
\begin{aligned}
	&\Tr \left\{ \left(\frac{1}{{N \choose K}} \sum_\mathcal{S} \mathbf{E}_\mathcal{S} \mathbf{E}_\mathcal{S}^\top  + \sigma_v^2 \mathbf{I}_N \right) \mathbf{Q} \right\} & \\
	&\overset{(a)}{=}   \Tr \left\{ \left(\frac{\sigma_x^2}{K}   +  \sigma_v^2 \right) \mathbf{Q} \right\} \leq P,&
\end{aligned}
\end{equation}
where $(a)$ follows from \lemref{lem:prop}.

Considering the objective function $\sum_\mathcal{S} \mathrm{MSE}^{(lb)}_\mathcal{S}$, it can be lower-bounded as
\begin{equation} \label{eq:closed_spec1_2}
\begin{aligned}
	&\sum_\mathcal{S} \mathrm{MSE}^{(lb)}_\mathcal{S} \geq \sum_\mathcal{S} K^2 \big /  \Tr \left\{\bigg(\frac{1}{\sigma_x^2} \mathbf{I}_K + \frac{g^2}{\sigma_w^2} \mathbf{E}_\mathcal{S}^\top \mathbf{Q} \mathbf{E}_\mathcal{S}  \right.& \\
	& \hspace{1.4cm} - \left. \frac{g^2}{\sigma_w^2} \mathbf{E}_\mathcal{S}^\top \mathbf{Q} \left(\frac{\sigma_w^2}{g^2 \sigma_v^2} \mathbf{I}_N + \mathbf{Q}\right)^{-1} \mathbf{Q} \mathbf{E}_\mathcal{S} \bigg)^{-1} \right\},&
\end{aligned}
\end{equation}
where we used the inequality $\Tr\{\mathbf{B}^{-1}\} \geq \frac{K^2}{\Tr\{\mathbf{B}\}}$ for a positive definite matrix $\mathbf{B}$ of dimension $K \times K$ \cite[Lemma 2]{03:Shengli}, in which the equality is satisfied when $\mathbf{B}$ becomes a scaled identity matrix. Hence, the objective function in the left hand side of \eqref{eq:closed_spec1_2} reaches its minimum when $\mathbf{Q} = \alpha \mathbf{I}_N$ (for some $\alpha > 0$) since $\mathbf{E}_\mathcal{S}^\top  \mathbf{E}_\mathcal{S} = \mathbf{I}_K$ (cf. \lemref{lem:prop}), and the matrix inside the argument of the trace becomes an scaled identity matrix. Note that this choice of $\mathbf{Q}$ does not affect the power constraint. Further, the coefficient $\alpha$ is derived such that the constraint  \eqref{eq:power_cons_spec1} is satisfied with equality that yields $\alpha = \frac{KP}{N(\sigma_x^2 + K \sigma_v^2)}$. Therefore, assuming $\mathbf{R} = \sigma_x^2 \mathbf{I}_K$ and  $\mathbf{H} = \mathbf{I}_N$, the solution to the SDR problem is 
\begin{equation} \label{eq:closed_spec1_3}
	\mathbf{Q}^\star = \frac{KP}{N(\sigma_x^2 + K \sigma_v^2)} \mathbf{I}_N. 
\end{equation}

Hence, the optimal sensing matrix $\mathbf{A}$ (with respect to \eqref{eq:opt_rec_A_appx}), after rescaling to meet the power constraint, becomes \eqref{eq:closed_spec2_4}.

\section{Proof of \proref{prop:sepc_2}}
Following the assumption in \proref{prop:sepc_2}, the SDR optimization problem simplifies into
\begin{equation} \label{eq:spec2_ref_1}
\begin{aligned}
	&\underset{\mathbf{Q}  }{\text{minimize}} \hspace{0.25cm} \sum_\mathcal{S} \Tr \left\{\bigg(\frac{1}{\sigma_x^2} \mathbf{I}_K + \frac{g^2}{\sigma_w^2} \mathbf{E}_\mathcal{S}^\top \mathbf{H}^\top \mathbf{Q} \mathbf{H} \mathbf{E}_\mathcal{S}  \bigg)^{-1} \right\}& \\
	& \text{subject to} \hspace{0.25cm} \frac{\sigma_x^2}{K}  \; \Tr \{\mathbf{H}^\top  \mathbf{Q} \mathbf{H}\} \leq P .& 
\end{aligned}	
\end{equation}

The objective function in \eqref{eq:spec2_ref_1} reaches its minimum when $\mathbf{H}^\top \mathbf{Q} \mathbf{H} = \alpha \mathbf{I}_N$ (see \cite[Lemma 2]{03:Shengli}). Taking SVD, we have $\mathbf{H} = \mathbf{U}_H \mathbf{\Gamma}_H  \mathbf{V}_H^\top$, where $ \mathbf{U}_H$ and  $\mathbf{V}_H$ are $N \times N$ unitary matrices and $\mathbf{\Gamma}_H = \mathrm{diag}(\gamma_{h_1},\gamma_{h_2}, \ldots, \gamma_{h_N} )$ is a diagonal matrix containing singular values $\gamma_{h_1}<\gamma_{h_2}< \ldots < \gamma_{h_N}$. Then, it follows that the optimal $\mathbf{Q}$ should have the following structure
\begin{equation} \label{eq:spec2_ref_2}
	\mathbf{Q}^\star = \alpha (\mathbf{HH}^\top)^{-1} =  \alpha \mathbf{U}_H \mathbf{\Gamma}_H^{-2} \mathbf{U}_H^\top,
\end{equation}
where by plugging \label{eq:spec2_ref_2} into the power constraint, we obtain $\alpha = \frac{KP}{N \sigma_x^2}$. Therefore, the optimal sensing matrix $\mathbf{A}$ (with respect to \eqref{eq:opt_rec_A_appx}) can be chosen as in \eqref{prop:sepc_3}.

\section{Proof of \proref{prop:sepc_3}}
Having the assumptions in \proref{prop:sepc_3}, the oracle estimator in \eqref{eq:oracle MMSE est} can be written as
\begin{equation} \label{eq:oracle_est_spe_3}
\widehat{\mathbf{x}}^{(or)} \!=\! 
	g \left(\frac{g^2 \sigma_v^2}{\sigma_x^2} \mathbf{I}_K \!+\! g^2   \mathbf{A}_\mathcal{S}^\top \; (\mathbf{A}\mathbf{A}^\top)^{\! \dagger} \; \mathbf{A}_\mathcal{S} \right)^{\!-1} \!\!\mathbf{A}_\mathcal{S}^\top   (\mathbf{A}\mathbf{A}^\top)^{\! \dagger} \mathbf{y},
\end{equation}
where $(\cdot)^\dagger$ denotes matrix pseudo-inverse. Using $\mathbf{A}_\mathcal{S} = \mathbf{A} \mathbf{E}_\mathcal{S}$, it gives the oracle MSE
\begin{equation}
	\mathrm{MSE}^{(lb)} \!= \! \! \frac{1}{{N \choose K}} \! \sum_\mathcal{S} \Tr \! \left \{ \! \left(\frac{1}{\sigma_x^2} \mathbf{I}_K \!+\! \frac{1}{\sigma_v^2} \mathbf{E}_\mathcal{S}^\top \mathbf{A}^{\!\top} (\mathbf{A}\mathbf{A}^{\! \top})^{\! \dagger} \mathbf{AE}_\mathcal{S} \! \right)^{\!-1} \! \right \}.
\end{equation}

Taking SVD, $\mathbf{A} = \mathbf{U}_a [\mathbf{\Gamma}_a \:\: \mathbf{0}_{N-M}] \mathbf{V}_a^\top$, it follows that 
\begin{equation} \label{eq:svd A}
	\mathbf{A}^{\!\top} (\mathbf{A}\mathbf{A}^{\! \top})^{\! \dagger} \mathbf{A} = \mathbf{V}_a 
	\left[ \begin{array}{c c} 
	   \mathbf{I}_M  & \mathbf{0}_{M \times (N-M)} \\ 
	   \mathbf{0}_{(N-M) \times M}  &   \mathbf{0}_{(N-M)\times (N-M)}  \\
	\end{array} \right]
	\mathbf{V}_a^\top.
\end{equation}

Applying \eqref{eq:svd A} into \eqref{eq:oracle_est_spe_3}, the sensing matrix optimization problem can be posed as following
\begin{equation} \label{eq:spec3_opt_prob}
\begin{aligned}
	&\underset{\mathbf{\Gamma}_a, \mathbf{V}_a }{\text{minimize}} \hspace{0.25cm} \sum_\mathcal{S} \Tr \left\{\bigg(\frac{1}{\sigma_x^2} \mathbf{I}_K \!+\! \frac{1}{\sigma_v^2} \mathbf{E}_\mathcal{S}^\top \mathbf{V}_a \!
	\left[ \! \! \begin{array}{c c} 
	   \mathbf{I}_M  & \mathbf{0}\\ 
	   \mathbf{0} &   \mathbf{0}  \\
	\end{array} \! \! \right] \! \mathbf{V}_a^{\! \top}
	 \mathbf{E}_\mathcal{S}  \bigg)^{\!-1} \right\}& \\
	& \text{subject to} \hspace{0.25cm} \frac{\sigma_x^2}{K}  \; \Tr \{  \mathbf{\Gamma}_a^2 \} \leq P .& 
\end{aligned}	
\end{equation}

We note that the objective function in \eqref{eq:spec3_opt_prob} can be minimized with respect to $\mathbf{U}_a$ independent of $\mathbf{\Gamma}_a$ in the constraint. Now, since $\mathbf{E}_\mathcal{S}^\top \mathbf{V}_a \mathbf{V}_a^\top \mathbf{E}_\mathcal{S} = \mathbf{I}_K$, the objective function in \eqref{eq:spec3_opt_prob} can be rewritten and lower-bounded as 
\begin{equation} \label{eq:rewrite_lb}
\begin{aligned}
	&\sum_\mathcal{S} \Tr \left\{\bigg( \mathbf{E}_\mathcal{S}^\top \mathbf{V}_a \!
	\left[ \! \! \begin{array}{c c} 
	   (\frac{1}{\sigma_x^2} \!+\! \frac{1}{\sigma_v^2}) \mathbf{I}_M  & \mathbf{0}\\ 
	   \mathbf{0} &   \frac{1}{\sigma_x^2} \mathbf{I}_{N-M}  \\
	\end{array} \! \! \right] \! \mathbf{V}_a^{\! \top}
	 \mathbf{E}_\mathcal{S}  \bigg)^{\!-1} \right\} & \\
	 &\geq \sum_\mathcal{S} \Tr \left\{\bigg( \mathbf{E}_\mathcal{S}^\top \mathbf{V}_a \!
	\left[ \! \! \begin{array}{c c} 
	   (\frac{1}{\sigma_x^2} \!+\! \frac{1}{\sigma_v^2}) \mathbf{I}_M  & \mathbf{0}\\ 
	   \mathbf{0} &   \frac{1}{\sigma_x^2} \mathbf{I}_{N-M}  \\
	\end{array} \! \! \right] \! \mathbf{V}_a^{\! \top}
	 \mathbf{E}_\mathcal{S}  \bigg)_{ii}^{\!-1} \right\}  &
\end{aligned}
\end{equation}
where by $(\cdot)_{ii}$ we denote the diagonal elements of the corresponding matrix. The equality in \eqref{eq:rewrite_lb} is satisfied if and only if  the matrix inside the trace-inverse operator becomes diagonal, which yields $\mathbf{V}_a = \mathbf{I}_N$. Also, from the constraint in \eqref{eq:spec3_opt_prob}, it follows that $\mathbf{\Gamma}_a$ can be an arbitrary diagonal matrix satisfying the transmission power constraint. For simplicity, we set $\mathbf{\Gamma}_a = \sqrt{\frac{KP}{M \sigma_x^2}} \mathbf{I}_M$. Hence, the optimal sensing matrix has the structure in \eqref{eq:spec_case_3}.

\section{Proof of \proref{prop:sepc_4} }
We have
\begin{equation} \label{eq:taylor ser}
\begin{aligned}
&	\mathrm{MSE}^{(lb)} = \frac{1}{{N \choose K}} \sum_\mathcal{S} \Tr \left\{\left(\mathbf{R}^{-1} + \frac{g^2}{\sigma_w^2} \mathbf{D}_\mathcal{S}^\top \mathbf{Q} \mathbf{D}_\mathcal{S} \right)^{-1} \right\} & \\ 
	&\overset{(a)}{=} \! \frac{1}{{N \choose K}} \! \sum_\mathcal{S} \Tr  \! \left\{\mathbf{R} \!-\! \frac{g^2}{\sigma_w^2}  \mathbf{R} \mathbf{D}_\mathcal{S}^\top \mathbf{Q} \mathbf{D}_\mathcal{S}  \mathbf{R} \right\} 
	\!+\! \mathcal{O}(\|\frac{g^2}{\sigma_w^2}   \mathbf{D}_\mathcal{S}^\top \mathbf{Q} \mathbf{D}_\mathcal{S} \|_F^2),&
\end{aligned}
\end{equation}
where $(a)$ follows from Taylor series for the inverse term inside the trace operator in the first equation. Now, since $\frac{g^2}{\sigma_w^2} \rightarrow 0$, by neglecting the higher moments and using linear property of the trace operator, the original optimization problem in \eqref{eq:opt 1} can be asymptotically approximated as
\begin{equation} \label{eq:opt asymp}
\begin{aligned}
	&\underset{\mathbf{Q}}{\text{maximize}} \hspace{0.25cm} \sum_\mathcal{S} \Tr \left\{\mathbf{R} \mathbf{D}_\mathcal{S}^\top \mathbf{Q} \mathbf{D}_\mathcal{S}  \mathbf{R} \right\} & \\
	& \text{subject to} \hspace{0.25cm} \Tr \{\mathbf{HR}_x \mathbf{H}^\top \mathbf{Q}\} 
	\leq P &  \\
		&\hspace{1.7cm} \mathbf{Q} \succeq \mathbf{0} , \hspace{0.15cm} \mathrm{rank}(\mathbf{Q}) = M.&
\end{aligned}
\end{equation}

Note that the objective function in \eqref{eq:opt asymp} can be rewritten as $\Tr \left \{ \left[\sum_\mathcal{S}  \mathbf{D}_\mathcal{S} \mathbf{R}^2 \mathbf{D}_\mathcal{S}^\top \right]   \mathbf{Q}  \right\} $ due to linear property of the trace operator. Now, defining the full-rank symmetric positive definite matrix $\mathbf{T} \triangleq \sum_\mathcal{S}  \mathbf{D}_\mathcal{S} \mathbf{R}^2 \mathbf{D}_\mathcal{S}^\top$, and denoting $\mathbf{T}^{1/2}\mathbf{Q}\mathbf{T}^{1/2} \triangleq \mathbf{L}$, the optimization problem in \eqref{eq:opt asymp} can be rewritten as 
\begin{equation} \label{eq:opt asymp 2}
\begin{aligned}
	&\underset{\mathbf{L}}{\text{maximize}} \hspace{0.25cm} \Tr \left\{\mathbf{L} \right\} & \\
	& \text{subject to} \hspace{0.25cm} \Tr \{\mathbf{T}^{-1/2} \mathbf{HR}_x \mathbf{H}^\top \mathbf{T}^{-1/2} \mathbf{L}\} 
	\leq P &  \\
		&\hspace{1.7cm} \mathbf{L} \succeq \mathbf{0},  \hspace{0.15cm} \mathrm{rank}(\mathbf{L}) = M.&
\end{aligned}
\end{equation}

Let $\mathbf{Z} \triangleq \mathbf{T}^{-1/2} \mathbf{HR}_x \mathbf{H}^\top \mathbf{T}^{-1/2}$ have the EVD $\mathbf{Z} = \mathbf{U}_z \mathbf{\Gamma}_z \mathbf{U}_z^\top$. We also decompose $\mathbf{L}$ as $\mathbf{L} = \mathbf{U}_l \mathbf{\Gamma}_l \mathbf{U}_l^\top$, where $\mathbf{U}_z$  and $\mathbf{U}_l$ are unitary matrices, and $\mathbf{\Gamma}_z$ and $\mathbf{\Gamma}_l$ are diagonal matrices. Further, $\mathbf{\Gamma}_l$ contains at most $M$ non-zero diagonal elements. In order to solve \eqref{eq:opt asymp 2}, we drop the rank constraint, and relax \eqref{eq:opt asymp 2} using \lemref{lem:constr bound} as
\begin{equation} \label{eq:opt asymp 3}
\begin{aligned}
	&\underset{\{ \gamma_{l_i} \}_{i=1}^L}{\text{maximize}} \hspace{0.25cm} \sum_{i=1}^L \gamma_{l_i} & \\
	& \text{subject to} \hspace{0.25cm} \sum_{i=1}^L \gamma_{z_i} \gamma_{l_i}
	\leq P &  \\
		&\hspace{1.7cm} \gamma_{l_i} \geq 0 \; \; , \; \; 1 \leq i \leq L,& 
\end{aligned}
\end{equation}
where $\gamma_{l_1} \geq \ldots \geq \gamma_{l_L}$ and $\gamma_{z_1} \leq \ldots \leq \gamma_{z_L}$.

Note that the optimization problem \eqref{eq:opt asymp 2}, without the rank constraint, and \eqref{eq:opt asymp 3} become equivalent when $\mathbf{Z L}$ is diagonal. This holds when $\mathbf{U}_l = \mathbf{U}_z$, where the columns of $\mathbf{U}_z$ are associated with the eigen-values of $\mathbf{Z}$ in an increasing order. Now, it only remains to solve \eqref{eq:opt asymp 3} for the singular-values $\gamma_{i}$'s ($1 \leq i \leq M$). It is well-known that the objective function in \eqref{eq:opt asymp 3} is maximized by letting $\gamma_{l_1} = \frac{P}{\gamma{z_1}}$, and $\gamma_{l_2} = \ldots = \gamma_{l_L} = 0$. Thus, it follows that
\begin{equation} \label{eq:sol asymp}
	\mathbf{Q}^\star =  \mathbf{T}^{-1/2}\mathbf{U}_z \mathrm{diag} \left(\frac{P}{\gamma_{z_1}},0,\ldots,0 \right)\mathbf{U}_z^\top \mathbf{T}^{-1/2}.
\end{equation}

Note that from \eqref{eq:sol asymp}, it is observed that the low rank matrix $\mathbf{Q}^\star$ in \eqref{eq:sol asymp} has only one non-zero eigen-value. Hence, using EVD of $\mathbf{Q}^\star$, we have $\mathbf{Q}^\star = \mathbf{U}_q  \mathrm{diag} \left(\gamma_q,0,\ldots,0 \right)  \mathbf{U}_q^\top$, where $\gamma_q > 0$ denotes the non-zero eigen-value of $\mathbf{Q}^\star$. Now, let the SVD of $\mathbf{A}$ be $\mathbf{A} = \mathbf{U}_a [\mathbf{\Gamma}_a \; \; \mathbf{0}_{M \times (L-M)}]\mathbf{V}_a^\top$, where $\mathbf{U}_a \in \mathbb{R}^{M \times M}$ and $\mathbf{V}_a \in \mathbb{R}^{L \times L}$ are unitary matrices, and $\mathbf{\Gamma}_a \in \mathbb{R}^{M \times M}$ is a diagonal matrix. From $\mathbf{Q} = \mathbf{A}^\top \mathbf{A}$, it is concluded that the optimal sensing matrix can be expressed as
\begin{equation} \label{eq:sol asymp2_prrof}
	\mathbf{A}^\star =  \mathbf{U}_a \left[\mathrm{diag}  \left(\sqrt{\gamma_q},0,\ldots,0 \right)\; \; \mathbf{0}_{M \times (L-M)} \right] \mathbf{U}_q^\top.
\end{equation}

\end{appendices}

\bibliographystyle{IEEEtran}
\bibliography{IEEEfull,bibliokthPasha}

\end{document}